\newcommand{\eg}{e.g.\@\xspace}
\newtheorem{definition}{Definition} 
\newtheorem{lemma}{Lemma}
\newtheorem{proposition}{Proposition} 
\newtheorem{theorem}{Theorem}
\newtheorem{corollary}{Corollary}
\newenvironment{proof}{\emph{Proof:}}{$\Box$\newline}
\newcommand{\ceiling}[1]{\lceil #1\rceil}
\newcommand{\floor}[1]{\lfloor #1\rfloor}
\newcommand{\cost}{\mathrm{Time}} 
\newcommand{\size}{\mathrm{Size}}
\newcommand{\comm}{\mathrm{comm}} 
\newcommand{\theroot}{\mathrm{root}}
\newcommand{\mpigather}{\texttt{MPI\_\-Gather}\xspace}
\newcommand{\mpigatherv}{\texttt{MPI\_\-Gatherv}\xspace}
\newcommand{\mpiscatter}{\texttt{MPI\_\-Scatter}\xspace}
\newcommand{\mpiscatterv}{\texttt{MPI\_\-Scatterv}\xspace}
\newcommand{\tuwgatherv}{\texttt{TUW\_\-Gatherv}\xspace}
\newcommand{\tuwscatterv}{\texttt{TUW\_\-Scatterv}\xspace}
\newcommand{\twogatherv}{\texttt{TWO\_\-Gatherv}\xspace}
\newcommand{\twoscatterv}{\texttt{TWO\_\-Scatterv}\xspace}
\newcommand{\mpiint}{\texttt{MPI\_\-INT}\xspace}
\title{On Optimal Trees for Irregular Gather and Scatter Collectives}
\author{Jesper Larsson Tr\"{a}ff\\
  TU Wien (Vienna University of Technology), Faculty of Informatics \\
  Vienna, Austria,\url{traff@par.tuwien.ac.at}
  }
\begin{document}
\maketitle

\begin{abstract}
We study the complexity of finding communication trees with the lowest
possible completion time for rooted, irregular gather and scatter
collective communication operations in fully connected, $k$-ported
communication networks under a linear-time transmission cost model.
Consecutively numbered processors specify data blocks of possibly
different sizes to be collected at or distributed from some (given)
root processor where they are stored in processor order.  Data blocks
can be combined into larger segments consisting of blocks from or to
different processors, but individual blocks cannot be split.  We
distinguish between ordered and non-ordered communication trees
depending on whether segments of blocks are maintained in processor
order.  We show that lowest completion time, ordered communication
trees under one-ported communication can be found in polynomial time
by giving simple, but costly dynamic programming algorithms.  In
contrast, we show that it is an NP-complete problem to construct
cost-optimal, non-ordered communication trees. We have implemented the
dynamic programming algorithms for homogeneous networks to evaluate
the quality of different types of communication trees, in particular
to analyze a recent, distributed, problem-adaptive tree construction
algorithm.  Model experiments show that this algorithm is close to the
optimum for a selection of block size distributions.  A concrete
implementation for specially structured problems shows that optimal,
non-binomial trees can possibly have even further practical advantage.
\end{abstract}

\paragraph{Keywords:}
Gather and Scatter Collective Operations, Irregular Collective
Operations, Communication trees, $k$-ported communication networks,
Dynamic programming, NP-completeness.

\maketitle
\section{Introduction}
Collective gather and scatter operations play a role in many parallel
applications for distributing or collecting data between a designated
(root) process and other processes in the application.  Gather and
scatter operations are therefore included as collective operations in
most interfaces and languages for parallel and distributed computing,
notably in MPI~\cite{MPI-3.1}, but for also in PGAS languages
and frameworks like
UPC++~\cite{ElGhazawiCarlsonSterlingYelick05,ZhengKamilDriscollShanYelick14},
and Global Arrays~\cite{Nieplocha06}.  In these gather and scatter
operations, a single process (thread) in a set of processes (threads)
is to either collect data blocks from or distribute data blocks to all
(other) processes (threads). While good algorithms for the regular
variants of the problems where all data blocks have the same size are
known for many types of communication networks under different
communication
models~\cite{AlexandrovIonescuSchauserScheiman97,ChanHeimlichPurkayasthavandeGeijn07,JohnssonHo89},
this is much less the case for the irregular variants where different
processes may contribute blocks of different sizes.

Collective communication operations, in particular irregular
operations where processes contribute different amounts of data, pose
two different algorithmic problems. The first is to determine for
a(ny) given input data block distribution to the operation how fast
the operation can be carried out, preferably by a closed-form
expression. The second is to determine the complexity of building a
communication schedule and structure that will allow to solve the
problem in the determined time. For many regular operations, the
latter problem can be solved in constant time per process, or with
only a small, acceptable, say, logarithmic overhead, while yielding
optimal completion time solutions. This is most often not the case for
irregular operations.

For gather and scatter operations, trees are natural communication
structures since data blocks flow to or from a single root process.
This paper contributes to clarify the complexity of finding optimal
(fastest, lowest completion time) communication trees for irregular
gather and scatter problems under specific communication network
assumptions that may serve as useful enough first approximations to
real interconnects and communication systems. In particular we show
that optimal trees can be constructed in polynomial time under certain
natural constraints on how trees are built, whereas without these
constraints, optimal tree construction is an NP-hard problem.

We study the problems under a linear transmission cost model, where
the cost of transmitting a data block from one processor to another is
proportional to the size of the block plus some constant start-up
latency. Processors can only be involved in a single, or a small
number of communication operations at a time. We assume that any
processor can communicate with any other processor, but the cost of
communication may be different for different pairs. Processes are
bound one-to-one to processors, and ranked consecutively from $0$ to
$p-1$, $p$ being the number of processors in the network. That is, our
communication model is the fully connected, $k$-ported,
non-homogeneous, linear cost communication
network~\cite{FraigniaudLazard94}.

We distinguish between two types of communication trees. In the gather
operation, data blocks from all processors are collected as a
consecutive segment at the root processor in \emph{rank
  order}. Conversely, for the scatter operation blocks stored in rank
order at the root processor are distributed to the other processors
such that processor $i$ eventually receives the $i$th block from the
root. An \emph{ordered} gather or scatter tree has the property that
segments of data blocks at processors that are interior tree nodes are
consecutive and in rank order, that is blocks for some processors
$j,j+1,\ldots,j+s$ for $s\geq 0$ with $i\in [j,\ldots,j+s]$ for
processor $i$.  This has the advantage that processors that have to
send blocks further on will never have to perform possibly costly,
local reorderings of these blocks, and can make the implementation
for, say, MPI, easier. However, this restriction may exclude
communication schedules with lower completion times.  For
\emph{non-ordered} trees, this constraint is dropped, and processors
are allowed to send or receive not necessarily rank ordered segments
of blocks in any order. Non-ordered trees may require reordering of
blocks into rank order, at least at the root processor.

Concrete contributions of the paper are the following:
\begin{itemize}
\item
For any given gather or scatter input instance, we show that with
one-ported communication, optimal (fastest, lowest completion time),
ordered communication trees can be found in polynomial time by a
simple, dynamic programming algorithm running in $O(p^3)$ operations
for the homogeneous communication cost case, and in $O(p^4)$
operations for the general, non-homogeneous case.
\item
We also show that optimal, binary communication trees can be computed
by the same dynamic programming approach; all algorithms extend easily
to the simpler, related problems of broadcast and reduction of data
blocks (vectors).
\item
We use the offline, dynamic programming constructions to compare the
completion times for different types of gather and scatter trees,
in particular
showing that a recently proposed, simple $\ceiling{\log p}$
communication round, bottom up algorithm~\cite{Traff18:irreggatscat}
can achieve good results compared to the optimal solutions.
\item
For specially structured problems consisting of two sizes of data
blocks that are either large or small, we indicate that optimal,
(non-)ordered trees can indeed perform (much) better on a concrete system
than the adaptive binomial trees generated by the algorithm
in~\cite{Traff18:irreggatscat}.
\item
Finally, we show that computing optimal, non-ordered communication
trees is an NP-complete problem, meaning that the problems of finding
optimal ordered and non-ordered trees are computationally
different. 
\end{itemize}

\subsection{Related work}

Results on the gather and scatter collective communication operations
can be found scattered over the literature. Standard, binomial tree
and linear algorithms for the regular problems that are indeed used in
most MPI library implementations, assuming a homogeneous, linear
transmission cost model, or a simple hierarchical system are described
in,
\eg,~\cite{ChanHeimlichPurkayasthavandeGeijn07,KielmannBalGorlatchVerstoepHofman01}. Extensions
to multiple communication ports for some of these algorithms have been
proposed in, \eg,~\cite{ChanvandeGeijnGroppThakur06,SackGropp15}.
Algorithms for the regular scatter operation for the $LogP$ and
$LogGP$ models were discussed
in~\cite{AlexandrovIonescuSchauserScheiman97}. Other algorithms and
implementations for the regular problems for MPI and UPC for a
specific processor architecture can be found
in~\cite{MallonTaboadaKoesterke16}.  Approaches to the irregular
problems for MPI can be found in,
\eg,~\cite{Traff04:gatscat,DichevRychovLastovetsky10,Traff18:irreggatscat},
but it is fair to say that there has been overall little attention
paid to these problems in the MPI community.  More theoretical papers
consider the problems in a different setting, \eg, that of finding
optimal communication schedules for given
trees~\cite{BhattPucciRanadeRosenberg93}.

\section{The Model and the Problems}

For now we leave concrete systems and interfaces like MPI aside but
return to specific issues in later remarks. We define the
communication models and problems in terms of processors carrying out
communication operations.
Since the gather and scatter operations are
semantically ``dual'', we mostly treat only one of them;
the results translate into analogous results for the other.

\subsection{Communication network model}

We assume a \emph{fully connected network} of $p$ communication
processors ranked consecutively from $0$ to
$p-1$~\cite{FraigniaudLazard94}. Processors can communicate pairwise
in a synchronous, point-to-point fashion with one processor sending to
another, receiving processor, both being involved during the
transmission.  Communication costs are \emph{linear} but not
necessarily homogeneous.  More precisely, the time for transmitting a
message of $m$ units (Bytes) from processor $i$ to processor $j$ is
modeled as $\alpha_{ij}+\beta_{ij}m$ where $\alpha_{ij}$ is a
\emph{start-up latency} for the communication and $\beta_{ij}$ a
\emph{time per transmitted unit}. Furthermore, each processor will
have a \emph{local copy cost} of $\gamma_i m$ for copying a data block
of $m$ units from one buffer to another.  Communication is
\emph{$k$-ported}, $k\geq 1$, meaning that a communication processor
can be engaged in at most $k$ communication operations at a time
during which it is fully occupied. All $\floor{p/2}$ processor pairs
can communicate at the same time. The cost of a communication
algorithm is the time for the slowest processor to complete
communication and local copying under the assumption that all
processors start at the same time.

\subsubsection*{Discussion}

The communication network model captures certain features of modern
high-performance systems, neglects others, and perhaps misrepresents
some. It is useful, if it leads to algorithms that perform well on
real systems (compared to other algorithms, designed under other
assumptions), and if it serves to clarify inherent complexities in the
gather and scatter problems.

The assumption that point-to-point communication is synchronous with
both processors involved when communication takes place is common
(albeit sometimes
implicit)~\cite{Bruck97,ChanHeimlichPurkayasthavandeGeijn07}, and
leads to contention free algorithms for truly fully connected,
one-ported networks.  Since it fixes 
when communication takes place, it can make the analysis of collective
algorithms significantly easier than in possibly more realistic models
like $LogGP$~\cite{AlexandrovIonescuSchauserScheiman97} that allow
outstanding communication operations and overlap of both communication
and computation. It often allows the development of optimal
algorithms which is also significantly more difficult under $LogGP$,
where few optimality results are known.  Algorithms designed under the
synchronous assumption often perform well in practice, where sometimes
strict synchronous communication is relaxed for better
performance. Models like $LogP$ and $LogGP$ can lead to algorithms
with unrealistically large numbers of outstanding communication
operations, unless some capacity constraint is externally imposed, and
contention has to be accounted for. Ironically, the optimal scatter
algorithm in the simple $LogP$ model has the root send data blocks to
the other processors one after the
other~\cite{AlexandrovIonescuSchauserScheiman97}. This does not
correspond to practical
experience~\cite{ThakurGroppRabenseifner05,Traff18:irreggatscat}, and
was one effect motivating the $LogGP$ model.

The non-homogeneous assumption makes it possible to model some aspects
of systems where routing between some processors $i$ and $j$ is
needed, or of clustered, hierarchical systems with different
communication characteristics inside and between compute nodes.
Sparse, non-fully connected networks can be captured by setting
latency $\alpha_{ij}=\infty$ for processors $i$ and $j$ that are not
connected in the network. The model cannot account for congestion in
such networks, though. In such sparse networks, there may not be
feasible solutions to the ordered gather and scatter problems. In
order to guarantee feasible solutions in all cases, instead
$\alpha_{ij}$ and $\beta_{ij}$ can be chosen to reflect the time for
routing from processor $i$ to processor $j$. The model could be
extended to piecewise linear transmission costs with different
$\alpha_{ij}$ and $\beta_{ij}$ values for different message ranges.

With $k>1$ communication ports in the model adopted here, processors
can be involved in up to $k$ concurrent communication operations at a
time. Most of our results in the following will be for $k=1$. The
general case with $k\geq 1$ is more difficult for reasons that will be
pointed out. A number of results for regular collective communication
operations in $k$-ported (torus) systems can be found
in,\eg,~\cite{Bruck97,ChanvandeGeijnGroppThakur06,SackGropp15}.

The communication system is \emph{homogeneous} if the communication
costs for all processor pairs are characterized by the same start-up
latency $\alpha$ and time per unit $\beta$. Likewise, homogeneous
processors will have the same local copy cost $\gamma$.  In our model,
point-to-point communication is always out of or into consecutive
communication buffers, which can necessitate local copies or
reorderings of data blocks. In interfaces like MPI, this can sometimes
be done implicitly by the use of derived datatypes~\cite[Chapter
  4]{MPI-3.1}.  It is probably realistic to assume that a local copy
can at least partially be performed concurrently with communication,
and our results can be adopted to this. However, for simplicity we
assume here that there is such a cost governed by $\gamma\geq 0$ that
cannot be overlapped with communication\footnote{Taking $\gamma=0$ is
  not strictly the same as assuming that a cost with $\gamma>0$ can be
  overlapped with communication; if communication is fast, some part
  of the local copy cost $\gamma m$ cannot be overlapped and will have
  to be paid.}. We tacitly assume that $\gamma_i\leq\beta_{ij}$ for
any processors $i$ and $j$ in order to prevent artificial algorithms
where some local data are sent back and forth between processors in
order to save on local copy costs.

Albeit pairwise communication is synchronous, the overall execution of
an algorithm is not. Each processor and communication port can engage
in new communication as soon as it has completed its previous
communication operation, independently of what the other processors
are doing. This means that delays can be incurred when the
communication partner is not yet ready.  Optimal algorithms will
minimize the overall effects of such delays.  Note that under this
asynchronous model, the information dissemination lower bound argument
from round-based, synchronous models of $\ceiling{\log_2 p}$
communication rounds~\cite{Bruck97} does not apply. Optimal gather and
scatter trees may well have a root degree smaller than
$\ceiling{\log_2 p}$. Lemma~\ref{lem:largesmall} gives an example.

\subsection{The irregular gather and scatter operations}

The \emph{irregular gather/scatter operations} are the following.
Each processor has a local communication buffer of size $m_i$ units
(Bytes). In addition, a designated \emph{root processor} $r, 0\leq
r<p$ has a buffer of size $m=\sum_{i=0}^{p-1}m_i$ capable of storing
data for all processors, including the root itself. We will refer to
$m$ as the \emph{size} of the gather/scatter problem.  A gather or
scatter problem is said to be \emph{regular}, if all processors have
the same buffer size $m_i=m/p$.

In the \emph{gather problem}, each processor has a \emph{data block}
$[m_i]$ of size $m_i$ in its communication buffer, and the root has to
collect all data blocks consecutively in rank order into a large
segment of blocks
$[m_0,\-\ldots,\-m_{r-1},\-m_r,\-m_{r+1},\-\ldots,\-m_{p-1}]$. The
\emph{scatter problem} is the opposite: The root has a large,
consecutive segment of blocks in rank order
$[m_0,\-\ldots,\-m_{r-1},\-m_r,\-m_{r+1},\-\ldots,\-m_{p-1}]$ in its
buffer, and has to distribute the blocks to the processors such that
processor $i$ eventually has the data block $[m_i]$ in its local
buffer.

For both gather and scatter operations, the root processor $r$ has a
data block $[m_r]$ to itself.

\subsubsection*{Discussion}

In common interfaces like MPI and UPC, the root is always an
externally given, fixed processor. Alternatively, the root could be
decided by the algorithm solving the gather/scatter problem and lead
to faster completion time. We consider both variations here.
Concrete interfaces may give more control over the placement of data
blocks at the root process. In MPI, for instance, the actual placement
is controlled by an explicit offset for each block. The internal
structure of blocks can likewise be controlled via MPI derived
datatypes. Such features, however, do not change the essential
algorithmic costs of the operations.  The MPI gather/scatter
operations also assume that the root process has a local block to
itself which has to be copied from one buffer to another, unless the
\texttt{MPI\_IN\_PLACE} option is supplied~\cite[Section
  5.5]{MPI-3.1}).

\begin{figure*}
\begin{eqnarray}
  \cost(T^r_{\{r\}}) & = & 0 \nonumber \\
\cost(T^r_R(T^{r_0}_{R_0},T^{r_1}_{R_1},\ldots,T^{r_{j-1}}_{R_{j-1}},T^{r_j}_{R_j})) & = & 
\max(\cost(T^r_R(T^{r_0}_{R_0},T^{r_1}_{R_1},\ldots,T^{r_{j-1}}_{R_{j-1}})),\cost(T^{r_j}_{R_j}(\cdot))) + \nonumber\\
& & \alpha_{r_jr}+\beta_{r_jr}\size(T^{r_j}_{R_j}) \label{eq:commcost}\\
\cost(T^r_R(T^{r_0}_{R_0},T^{r_1}_{R_1},\ldots,T^{r_{j-1}}_{R_{j-1}},T^{r}_{\{r\}})) & = & 
\cost(T^r_R(T^{r_0}_{R_0},T^{r_1}_{R_1},\ldots,T^{r_{j-1}}_{R_{j-1}}))+\gamma_rm_r 
\label{eq:localcopy}
\end{eqnarray}
\caption{Equations defining the completion time of a gather or scatter
  tree $T^r_R$ over processors in $R$ rooted at processor $r\in R$
  with (prefix of) subtrees
  $(T^{r_0}_{R_0},T^{r_1}_{R_1},\ldots,T^{r_i}_{\{r_i\}})$ in that
  order. One subtree must be $T^r_{\{r\}}$ for which the local copy at
  the root $r$ is done. $T^{r_j}_{R_j}(\cdot)$ denotes the $j$th
  subtree with its full sequence of subtrees.}
\label{fig:treecost}
\end{figure*}

\subsection{Lower bounds}
\label{sec:lowerbounds}

An obvious lower bound for both scatter and gather operations with
root processor $r, 0\leq r<p$ in a one-ported, homogeneous
communication cost model is $\alpha+\beta(\sum_{i\neq r}m_i)+\gamma
m_r$ since all data blocks have to be sent from or received at the
root from some (or several) processor(s), except for the root's own
block for which a local copy cost has to be paid. With non-homogeneous
communication costs, a lower bound for, \eg, the scatter operation is
$\min_{j\neq r}(\alpha_{rj}+\beta_{rj}\sum_{i\neq r}m_i)+\gamma_r
m_r$, determined by the fastest reachable neighbor processor of the root.

With instead $k>1$ communication ports that can work simultaneously, a
lower bound for the homogeneous cost case becomes
$\alpha+\beta(\ceiling{\sum_{i\neq r}m_i/k})+\gamma m_r$, assuming that
data blocks $[m_i]$ can be arbitrarily split.

In the algorithms we consider here, this will be forbidden. Blocks can
be compounded into larger segments of blocks, but individual blocks
$[m_i]$ cannot be further subdivided. In this case, a lower bound is
less trivial to formulate. Let $P_0,P_1,\ldots P_{k-1}$ be a partition
of $\{0,\ldots p-1\}\setminus\{r\}$ into $k$ subsets. Assume that the
data blocks are distributed over $k$ processors such that processor
$i$ has the blocks in $P_i$. In that case, the gather and scatter
operations can be completed in time $\alpha+\max_{0\leq
  j<k}\beta\sum_{i\in M_j}m_i+\gamma m_r$. A lower bound is given by
a best such partition, \eg, by a partition that minimizes this
gather/scatter time.

\subsubsection*{Discussion}

For the $k$-ported lower bound, we assumed that the communication
processor can with only one start-up latency $\alpha$ initiate or
complete $k$ communication operations. This latency $\alpha$ may
depend on $k$. It could instead be assumed that each initiated
communication operation would occur its own start-up latency
$\alpha$. See~\cite{Bruck97,ChanvandeGeijnGroppThakur06,SackGropp15}
for further discussion.

The lower bound for $k$-ported communication indicates that attaining
it implicitly requires solving a possibly hard packing
problem. Section~\ref{sec:hardness} shows that this is the case even
in the one-ported case, unless the allowed algorithms are further
restricted.

\subsection{Gather and scatter trees}

The trivial algorithms for the gather and scatter operations let the
root processor receive or send the data blocks from or to the other
processors in some fixed order. The trivial algorithms have a high
latency term of $(p-1)\alpha$. Possibly better algorithms let
processors collect larger segments of data blocks that are later
transmitted either as a whole or as smaller segments.  We consider
such algorithms here, but restrict attention to trees in the following
strict sense.

\begin{definition}
  \label{def:gatherscattertrees}
  Let the $p$ processors be organized in a tree rooted at root
  processor $r$.  A \emph{gather tree algorithm} allows each processor
  except $r$ to perform a single send operation (of a segment of
  blocks, towards the root).  A \emph{scatter tree algorithm} allows
  each processor except $r$ to perform a single receive operation (of
  a segment of blocks).
\end{definition}

\subsubsection*{Discussion}

The restriction to trees means that no individual blocks $[m_i]$ are
allowed to be split, since this would imply that some processors in a
gather tree perform several send operations. In the one-ported,
homogeneous transmission cost model, this restriction is not serious,
since communication time between two processors cannot be improved by
pipelining, and pipelining a block through a longer path of processors
cannot be faster than sending the block directly to the root. It also
prevents algorithms that send some blocks several times, but this
would be redundant anyway and cannot improve the gather completion
time. All current implementations for the MPI gather and scatter
operations are such tree algorithms (to the authors knowledge).

It is possible that a gather operation could be performed faster,
especially with non-homogeneous communication costs by algorithms
where processors collect segments of blocks that are then split into
smaller segments and sent through different paths to the root (even
without violating the restriction that individual blocks are not
split). The author knows of no such DAG (Directed Acyclic Graph)
algorithms for the gather and scatter operations.

It is well-known that optimal trees for regular gather and scatter
problems in the linear transmission cost model are binomial trees with
root degree $\ceiling{\log_2 p}$ when $\alpha>0$, see,
\eg~\cite{ChanHeimlichPurkayasthavandeGeijn07}. As will be seen, this
is not the case for the irregular problems.

\subsection{Gather and scatter tree completion times}

We now formally define the completion time (cost) of gather and
scatter trees for the irregular operations. We (first) restrict
the communication system to be one-ported.

Let $T^r_R$ be a gather/scatter tree spanning a set of processors $R$,
rooted at processor $r\in R$. The \emph{size} of $T_R$ is defined to
be the sum of the sizes of the data blocks for the processors in the
tree, $\size(T^r_R) = \sum_{i\in R}m_i$. Unless $T^r_R$ is a singleton
tree $T^r_{\{r\}}$, the tree has some $j+1$ rooted subtrees as
children, $T^{r_0}_{R_0},T^{r_1}_{R_1},\ldots,T^{r_j}_{R_j}$, for
disjoint sets $R_0,R_1,\ldots,R_k$ with $r_i\in R_i$ that partition
$R$. By $T^r_R(T^{r_0}_{R_0},T^{r_1}_{R_1},\ldots,T^{r_j}_{R_j})$ we
denote the tree with (a prefix of) its subtrees in that order.
Since the root by the one-ported communication assumption must
communicate with these subtrees one after the other, the subtrees are
considered in sequence, one after the other, and this order is given
as part of the gather/scatter tree. Since local copying cannot by our
assumptions be overlapped with communication, the root $r$ has at some
point to perform its local copy of its data block $[m_i]$ which we
account for by having one of the subtrees be the singleton
$T^r_{\{r\}}$ for which it holds that $\size(T^r_{\{r\}})=m_r$.

\begin{definition}
\label{def:treecost}
The \emph{completion time} (cost),
$\cost(T^r_R(T^{r_0}_{R_0},\-T^{r_1}_{R_1},\-\ldots,\-T^{r_j}_{R_j}))$,
of a gather or scatter tree $T^r_R$ with a full sequence of subtrees
$T^{r_0}_{R_0},T^{r_1}_{R_1},\ldots,T^{r_j}_{R_j}$ in that order under
the one-ported network model is defined by the equations given in
Figure~\ref{fig:treecost}.  A tree $T^r_R$ is \emph{optimal} if it has
least completion time over all possible trees over $R$ with root $r$.
\end{definition}

The equations in Figure~\ref{fig:treecost} express that before
processor $r$ can gather from its $i$th subtree $T^{r_j}_{R_j}$, it
must have completed gathering from the previous subtrees. When also
the gathering in $T^{r_j}_{R_j}$ by processor $r_j$ has been
completed, the segment of data blocks from $T^{r_j}_{R_j}$ can be sent
from processor $r_j$ to processor $r$ at the cost given by the
(non-homogeneous) transmission cost model
(Equation~(\ref{eq:commcost})). If the $j$th subtree is $r$ itself,
the local copy has to be done (Equation~(\ref{eq:localcopy})).  For
scattering, the root first sends the segment of data blocks for the
$j$th subtree to processor $r_j$, and then scatter to the preceding
subtrees. The completion time is the transmission time plus the time
for the slower of the two concurrent scatter operations.

We note that given a gather or scatter tree as described, the
completion time of the operation can be computed in $O(p)$ time steps
by a bottom up traversal of the tree.

\begin{figure}
\begin{center}
\begin{tikzpicture}[scale=0.75]
\node (r) at (4,2.5) [fill,circle,inner sep=1pt] {}; \draw (r) node
      [above] {$r=9$};

\node (r0) at (1,1.5) [fill,circle,inner sep=1pt] {}; 
\node (i0) at (0,0) {}; 
\node (j0) at (1,0) {};

\draw[draw=none] (i0) -- (j0) node [midway,below] {$T^2_{[0,\ldots,2]}$};
\draw[fill=gray!10] (j0.center) -- (r0.center) -- (i0.center) -- cycle;
\draw (r0) node [above] {$2$};

\node (r1) at (3,1.5) [fill,circle,inner sep=1pt] {};
\node (i1) at (2,0) {};
\node (j1) at (4,0) {};

\draw[draw=none] (i1) -- (j1) node [midway,below] {$T^6_{[3,\ldots,8]}$};
\draw[fill=gray!10] (j1.center) -- (r1.center) -- (i1.center) -- cycle; 
\draw (r1) node [above] {$6$};

\node (r2) at (5,1.5) [fill,circle,inner sep=1pt] {}; 
\node (i2) at (5,0) {};
\node (j2) at (6,0) {};

\draw[draw=none] (i2) -- (j2) node [midway,below] {$T^{10}_{[10,\ldots,12]}$};
\draw[fill=gray!10] (j2.center) -- (r2.center) -- (i2.center) -- cycle; 
\draw (r2) node [above] {$10$};

\node (r3) at (10,1.5) [fill,circle,inner sep=1pt] {}; 
\node (i3) at (7,0) {}; 
\node (j3) at (10,0) {};

\draw[draw=none] (i3) -- (j3) node [midway,below] {$T^{20}_{[13,\ldots,20]}$};
\draw[fill=gray!10] (j3.center) -- (r3.center) -- (i3.center) -- cycle;
\draw (r3) node [above] {$20$};

\draw (r0) -- (r); 
\draw (r1) -- (r);
\draw (r2) -- (r); 
\draw (r3) -- (r);
\end{tikzpicture}
\end{center}
\caption{An ordered gather or scatter tree for $p=21$ processors
  rooted at processor $r=9$. The tree is neither binary nor binomial.
  In a gather operation, the root receives consecutive segments of
  data blocks from the subtrees from left to right. In a scatter
  operation, the root sends consecutive segments to the subtrees from
  right to left.  Whether the tree is optimal depends on the data
  block sizes $m_i$.}
\label{fig:orderedtree}
\end{figure}

In order to classify the complexity of constructing optimal gather and
scatter trees we now introduce a further constraint on gather and
scatter trees.

\begin{definition}
\label{def:ordering}
A gather or scatter communication tree $T^r_R$ is \emph{(strongly)
  ordered} if $R=[i,i+1,\ldots,j]$ is a list of consecutively ranked
processors from $i$ to $j, j\geq i$, $r\in R$, each subtree is
likewise strongly ordered, and the subtrees are sequenced one after
the other in such a way that the last processor in any subtree is the
processor ranked immediately before the first processor in the
immediately following subtree.

A gather or scatter tree $T^r_R$ that is not strongly ordered
\emph{non-ordered}.
\end{definition}

\subsubsection*{Discussion}

Ordered versus non-ordered are structural properties of the
communication trees. The completion time of
Definition~\ref{def:treecost} assumes $k=1$ communication ports, but
can be extended also to $k>1$ communication ports.  Each port can
independently receive or send blocks from a subset of the subtrees,
one after the other, and the completion time would be the time for the
last port to finish. Subtrees could be assigned to ports
statically. Alternatively, it could be assumed that the assignment is
done greedily, such that a finished ports starts sending or receiving
from the next subtree in the sequence not assigned to a port. Note
that the precise choice would not change the cost of an optimal
completion time tree.

An ordered communication tree with four (ordered) subtrees is shown in
Figure~\ref{fig:orderedtree}. In an ordered gather tree algorithm,
each processor gathers and maintains only a consecutively ordered
segment of data blocks from its children, and at no stage will blocks
have to be permuted to fulfill the consecutive ordering constraint at
the root.  Also no prior offset calculations are needed, the next
block segment can be placed in the communication buffer immediately
after the already received blocks.  If the ordered subtrees are not
sequenced one after another as defined, and the block segments are
received in some possibly non-consecutive order (as could be the case
in a non-blocking, asynchronous communication model), the tree is
still said to be \emph{weakly ordered}.  Note that a gather tree is
strongly ordered if and only if the strongly ordered subtrees are
sequenced such that the roots are in increasing rank order.

Non-ordered trees provide more freedom to reduce completion time, but
permutation of blocks at the root or other non-leaf processors to put
blocks into rank order could lead to extra costs of $\gamma_r m$ or
more. Still note that an optimal, non-ordered tree may complete faster
than an optimal, weakly ordered tree, which may in turn complete
faster than an optimal, ordered tree.

In~\cite{Traff18:irreggatscat} it was shown that ordered, binomial
gather and scatter trees for the homogeneous transmission cost model
can be found efficiently in a distributed manner with a root processor
chosen by the algorithm. We use this result here, and also compare
these trees against other (optimal) trees in our cost model in
Section~\ref{sec:quality}. These trees are good, but (usually) not
optimal; but (as will be seen) better trees require a large effort to
construct.
\begin{proposition}
\label{prop:linear}
Let $m_i,i=0,\ldots,p-1$ be the data block sizes for the $p$
processors. With homogeneous communication costs, and local copy cost
$\gamma=\beta$, strongly ordered gather and scatter trees with
completion time $\ceiling{\log_2 p}\alpha+\beta\sum_{i=0}^{p-1}m_i$
exist, and can be constructed offline in $O(p)$ steps.
\end{proposition}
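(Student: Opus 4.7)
The plan is to exhibit an explicit strongly ordered binomial-like tree and then compute its completion time by structural induction on $p$ using the recurrence of Figure~\ref{fig:treecost}. By the duality between gather and scatter it suffices to treat the gather case, and by reflection of the rank axis I may further take the root to be $r=0$. I would define the tree recursively: root $0$ has $d=\ceiling{\log_2 p}$ child subtrees, with the $j$-th child rooted at rank $2^j$ and covering the contiguous range $R_j=[2^j,\min(2^{j+1}-1,p-1)]$, itself built by the same rule on the shifted range. In the root's subtree sequence I place $T^0_{\{0\}}$ first (the local-copy placeholder, at rank $0$), followed by $T^{r_0}_{R_0},T^{r_1}_{R_1},\ldots,T^{r_{d-1}}_{R_{d-1}}$ in increasing rank order, which meets the adjacency requirement of Definition~\ref{def:ordering}. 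The description is purely combinatorial on the ranks and can be emitted in $O(p)$ operations; the scatter tree is the same tree with send and receive reversed.

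The analysis proceeds by induction on $|R|$. The base case $|R|=1$ is a singleton with cost $\gamma m_r=\beta m_r=\ceiling{\log_2 1}\alpha+\beta m_r$. For the step I would apply the hypothesis to each subtree: $T^{r_j}_{R_j}$ has at most $2^j$ processors and block sum $M_j=\sum_{i\in R_j}m_i$, hence completion time at most $j\alpha+\beta M_j$. Starting from $\beta m_0$ produced by Equation~(\ref{eq:localcopy}) and iterating Equation~(\ref{eq:commcost}), the running cost after the $j$-th communication subtree is intended to telescope to $(j+1)\alpha+\beta(m_0+M_0+\cdots+M_j)$, so that after all $d$ communication subtrees the total equals $d\alpha+\beta\sum_i m_i=\ceiling{\log_2 p}\alpha+\beta\sum_i m_i$, matching the claim.

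The step I expect to be the main obstacle is verifying that the $\max$ in Equation~(\ref{eq:commcost}) is always realised by the accumulated root-side cost, not by $\cost(T^{r_j}_{R_j})$; otherwise the telescoping breaks and an additional waiting term appears. This reduces to the size inequality $M_j\leq m_0+M_0+\cdots+M_{j-1}$ at every internal node. The inequality is tight in the uniform case and is implied by the doubling structure of the ranges whenever the block sizes are sufficiently balanced relative to the recursive halves, but in the fully general case it requires either a careful choice of how each range is partitioned among subtrees (still computable in $O(p)$ by one bottom-up pass over prefix sums of $m_i$), or exploiting the freedom in choosing the root permitted by the proposition's phrasing together with the exchange $\gamma=\beta$. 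The $O(p)$ construction bound itself is then immediate once the structural rule is fixed, since only range endpoints and prefix sums must be computed at each level.
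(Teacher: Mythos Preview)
Your telescoping analysis is set up correctly, and you rightly pinpoint that everything hinges on the inequality $M_j\leq m_0+M_0+\cdots+M_{j-1}$ at every merge. But the construction you actually describe---the \emph{oblivious} binomial tree with root fixed at rank~$0$ and subtree ranges $[2^j,\min(2^{j+1}-1,p-1)]$---does not satisfy this inequality for arbitrary block sizes, and there is no way to rescue it by adjusting split points alone while keeping the root at $0$. Take $p=4$, $m_0=m_1=1$, $m_2=m_3=100$: your tree gives cost $2\alpha+400\beta$, whereas the claimed bound is $2\alpha+202\beta$. More generally, if $m_{p-1}$ alone exceeds $\sum_{i<p-1}m_i$, the last subtree of root~$0$ violates the inequality no matter how you partition. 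So the reduction ``take $r=0$ by reflection'' is already a mistake: the proposition asserts existence with the root \emph{chosen by the construction}, and that choice must depend on the data.

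The idea you are missing is exactly the one the paper singles out: build the tree bottom-up by pairing adjacent ranges of equal processor count, but at each merge let the subtree with the \emph{smaller} total size send to the one with the larger total size, so the root of the merged tree lies in the heavier half. With this rule, if both halves at level~$\ell$ satisfy $\cost\leq\ell\alpha+\beta\,\size$, the merged tree has cost
\[
\max(\ell\alpha+\beta s_A,\ \ell\alpha+\beta s_B)+\alpha+\beta\min(s_A,s_B)=(\ell{+}1)\alpha+\beta(s_A+s_B),
\]
and the induction closes without any side condition on the $m_i$. Your final paragraph gestures at ``a careful choice of how each range is partitioned'' and ``freedom in choosing the root'', but the first is not the issue (the halving partition is fine) and the second is needed \emph{at every internal merge}, not just at the top. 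Once you make this adaptive direction-of-send explicit, the $O(p)$ construction and the completion-time bound follow as you outlined.
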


The trick of the construction is to pair adjacent, ordered subtrees
with the same number of processors, but such that the subtree with the
smaller amount of data sends (in the gather case) its data to the tree
with the larger amount of data.

\section{Polynomial Time Constructions for Ordered Trees}

\begin{figure}
\begin{center}
\begin{tikzpicture}[scale=0.75]
\node (r) at (0.5,1.5) [fill,circle,inner sep=1pt] {};
\node (i) at (0,0) {}; 
\node (j) at (1.5,0) {};

\draw[draw=none] (i) -- (j) node [midway,below] {$T^r_R$}; 
\draw[fill=gray!10] (j.center) -- (r.center) -- (i.center) -- cycle; 
\draw (r) node [above] {$r\in R$};

\node (r1) at (3.5,1) [fill,circle,inner sep=1pt] {}; 
\node (i1) at (3,0) {};
\node (j1) at (4,0) {};

\draw[draw=none] (i1) -- (j1) node [midway,below] {$T^{r'}_{\bar{R}}$}; 
\draw[fill=gray!10] (j1.center) -- (r1.center) -- (i1.center) -- cycle; 
\draw (r1) node [anchor=south west] {$r'\in\bar{R}$};

\draw (r1) -- (r);
\end{tikzpicture}

\begin{tikzpicture}[scale=0.75]
\node (r) at (0.5,1.5) [fill,circle,inner sep=1pt] {}; 
\node (i) at (0,0) {};
\node (j) at (1.5,0) {};

\draw[draw=none] (i) -- (j) node [midway,below] {$T^r_{[0,\ldots,k]}$}; 
\draw[fill=gray!10] (j.center) -- (r.center) -- (i.center) -- cycle; 
\draw (r) node [above] {$r\in [0,\ldots,k]$};

\node (r1) at (3.5,1) [fill,circle,inner sep=1pt] {}; 
\node (i1) at (3,0) {};
\node (j1) at (4,0) {};

\draw[draw=none] (i1) -- (j1) node [midway,below] {$T^{r'}_{[k+1,\ldots,p-1]}$}; 
\draw[fill=gray!10] (j1.center) -- (r1.center) -- (i1.center) -- cycle; 
\draw (r1) node [anchor=south west] {$r'$};

\draw (r1) -- (r);

\node (r) at (9.5,1.5) [fill,circle,inner sep=1pt] {}; 
\node (i) at (9,0) {};
\node (j) at (10.5,0) {};

\draw[draw=none] (i) -- (j) node [midway,below] {$T^r_{[k+1,\ldots,p-1]}$}; 
\draw[fill=gray!10] (j.center) -- (r.center) -- (i.center) -- cycle; 
\draw (r) node [above] {$r\in [k+1,\ldots,p-1]$};

\node (r1) at (6.5,1) [fill,circle,inner sep=1pt] {}; 
\node (i1) at (6,0) {};
\node (j1) at (7,0) {};

\draw[draw=none] (i1) -- (j1) node [midway,below] {$T^{r'}_{[0,k]}$}; 
\draw[fill=gray!10] (j1.center) -- (r1.center) -- (i1.center) -- cycle; 
\draw (r1) node [anchor=south west] {$r'$};

\draw (r1) -- (r);
\end{tikzpicture}
\end{center}
\caption{Structure of optimal completion time, non-ordered and
  ordered, gather and scatter trees.  The communication time (for
  gather, in the direction of the root $r$) is
  $\alpha_{r'r}+\-\beta_{r'r}\size(T^{r'}_{\bar{R}})$, and
  $\alpha_{r'r}+\-\beta_{r'r}\size(T^{r'}_{[k+1,\ldots,p-1]})$ or
  $\alpha_{r'r}+\-\beta_{r'r}\size(T^{r'}_{[0,\ldots,k,]})$,
  respectively, which has to be paid in addition to the respective
  times $\max(\cost(T_R),\cost(T_{\bar{R}}))$ and
  $\max(\cost(T_{[0,\ldots,k]}),\cost(T_{[k+1,\ldots,p-1]}))$ for
  completing the subtrees. For the ordered trees, the cases where
  $r\in[0,\ldots,k]$ and $r\in[k+1,\ldots,p-1]$ both have to be
  considered.}
\label{fig:trees}
\end{figure}

\begin{figure*}
\begin{eqnarray}
\cost(T^r_{[r]}) & = & 0 \nonumber \\
\cost(T^r_{[i,\ldots,j]}) & = &
\max(\cost(T^r_{[i,\ldots,k]}),\cost(T^{r'}_{[k+1,\ldots,j]})) + 
\alpha_{r'r}+\beta_{r'r}\size(T^{r'}_{[k+1,\ldots,j]}) \label{eq:orderleft} \\
\cost(T^r_{[i,\ldots,j]}) & = &
\max(\cost(T^{r'}_{[i,\ldots,k]}),\cost(T^{r}_{[k+1,\ldots,j]})) + 
\alpha_{r'r}+\beta_{r'r}\size(T^{r'}_{[i,\ldots,k]}) \label{eq:orderright} \\
\cost(T^r_{[r,\ldots,j]}) & = &
\max(\gamma_rm_r,\cost(T^{r'}_{[r+1,\ldots,j]})) + 
\alpha_{r'r}+\beta_{r'r}\size(T^{r'}_{[r+1,\ldots,j]}) \label{eq:copyleft}\\
\cost(T^r_{[i,\ldots,r]}) & = &
\cost(T^{r'}_{[i,\ldots,r-1]}) + 
\alpha_{r'r}+\beta_{r'r}\size(T^{r'}_{[i,\ldots,r-1]})+\gamma_rm_r \label{eq:copyright}
\end{eqnarray}
\caption{Equations characterizing the completion time of ordered
  gather and scatter trees over a range of processors $[i,\ldots,j]$
  rooted at processor $r\in[i,\ldots,j]$.}
\label{fig:optimalordered}
\end{figure*}

We now show that optimal, smallest completion time, ordered gather and
scatter trees can be constructed in polynomial trees.

\subsection{Characterizing tree completion times}

The following observation express the tree completion times more
concisely and is crucial for all following algorithms and
results. Propositions~\ref{prop:optimal} and~\ref{prop:optimalordered}
both assume one-ported communication, and are difficult to extend to
more communication ports.

\begin{proposition}
\label{prop:optimal}
An optimal completion time, non-ordered communication tree for an
irregular gather (or scatter) operation over a set $P$ (of at least
two) one-ported processors consists in a subtree $T^r_R$ rooted at
some processor $r\in R$ and a subtree $T^{r'}_{\bar{R}}$ rooted at
some other processor $r'\in \bar{R}$ where $R$ and $\bar{R}$ is a
partition of $P$ with communication between processors $r$ and $r'$
that minimizes $\cost(T^r_P)$ defined by the following equations.
\begin{eqnarray}
\cost(T^r_{\{r\}}) & = & 0 \nonumber \\
\cost(T^r_P) & = & \max(\cost(T^r_R),\cost(T^{r'}_{\bar{R}})) + 
\alpha_{r'r}+\beta_{r'r}\size(T^{r'}_{\bar{R}}) \label{eq:later}\\
\underset{R=\{r\}}{\cost(T^r_P)} & = & \max(\gamma_rm_r,\cost(T^{r'}_{\bar{R}})) +
\alpha_{r'r}+\beta_{r'r}\size(T^{r'}_{\bar{R}}) \label{eq:local}
\end{eqnarray}
\end{proposition}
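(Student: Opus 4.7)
The plan is to prove the proposition by induction on $|P|$, showing that the recursive equations (\ref{eq:later}) and (\ref{eq:local}) are equivalent to a reformulation of Figure~\ref{fig:treecost} in which the sequence of subtrees at each node is ``peeled off'' one at a time from the end. The base case $|P|=1$ is immediate since only $T^r_{\{r\}}$ exists and both formulations assign it cost $0$.

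For the inductive step, I would take any non-ordered tree $T^r_P$ with $|P|\geq 2$ and list the children of its root $r$ in the communication order dictated by the tree, including the singleton $T^r_{\{r\}}$ that marks when the local copy is scheduled. Let $T^{r'}_{\bar R}$ denote the \emph{last} subtree in this ordering, and set $R = P\setminus \bar R$. Removing this last subtree leaves a valid tree $T^r_R$ rooted at $r$, and applying (\ref{eq:commcost}) at the root yields $\cost(T^r_P) = \max(\cost(T^r_R),\cost(T^{r'}_{\bar R})) + \alpha_{r'r}+\beta_{r'r}\size(T^{r'}_{\bar R})$, precisely equation (\ref{eq:later}). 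When $R = \{r\}$, the local copy is the only activity left at $r$; since it can proceed concurrently with the ongoing gather at $r'$, its cost $\gamma_r m_r$ enters the $\max$ in place of $\cost(T^r_{\{r\}})=0$, giving equation (\ref{eq:local}). The scatter direction follows by the same argument with the roles of sender and receiver swapped.

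For optimality, I would observe that for each fixed partition $(R,\bar R)$ with $r\in R$ and choice $r'\in \bar R$ the transmission term $\alpha_{r'r}+\beta_{r'r}\size(T^{r'}_{\bar R})$ depends only on $\bar R$ and not on the internal structure of either subtree. Replacing either subtree by a cheaper one over the same processor set can only weakly decrease the $\max$, so the best tree compatible with a given decomposition is built from optimal subtrees, which exist by the induction hypothesis. Since every tree $T^r_P$ corresponds to exactly one such decomposition via its last-in-order subtree, minimising the right-hand sides of (\ref{eq:later}) and (\ref{eq:local}) over all finitely many partitions and root choices $r'$ yields the optimal cost.

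The main obstacle I anticipate is the bookkeeping of the local-copy cost. In Figure~\ref{fig:treecost} the copy is encoded as a zero-cost singleton subtree with its cost accounted for through (\ref{eq:localcopy}), whereas in the proposition it appears explicitly inside the $\max$ of (\ref{eq:local}) only when $R=\{r\}$. I would need to check that both accountings agree at every position at which the copy may be scheduled among the children of $r$: if it is not last, it is absorbed into the recursive $\cost(T^r_R)$ term of (\ref{eq:later}) and is not double-counted; if it is last, (\ref{eq:local}) correctly bills $\gamma_r m_r$ while allowing overlap with the outstanding reception from $r'$. Everything else amounts to unfolding the definitions.
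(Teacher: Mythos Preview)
Your overall approach---peel off the last subtree at the root and recurse---is exactly what the paper does, and your optimal-substructure argument is fine. The one real gap is in the local-copy bookkeeping you flag at the end.

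Your resolution of that obstacle is not quite right. If the local copy $T^r_{\{r\}}$ is scheduled \emph{last} among the root's children, then peeling off the last subtree gives $\bar R=\{r\}$ and $R=P\setminus\{r\}$, so $r\notin R$ and there is no tree $T^r_R$ to recurse on; equation~(\ref{eq:local}), which requires $R=\{r\}$, does not apply here at all. Moreover, when the copy is genuinely last there is no ``outstanding reception from $r'$'' to overlap with: by Figure~\ref{fig:treecost}, equation~(\ref{eq:localcopy}), the cost is the full communication cost \emph{plus} $\gamma_r m_r$, not a $\max$. So the recursion~(\ref{eq:later})/(\ref{eq:local}) simply does not enumerate trees in which the copy is last, and you must argue separately that such trees are never optimal.

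The paper fills this gap with a one-line comparison: doing communication first and the copy afterwards costs $\cost(T^{r'}_{\bar R})+\alpha_{r'r}+\beta_{r'r}\size(T^{r'}_{\bar R})+\gamma_r m_r$, which is at least $\max(\gamma_r m_r,\cost(T^{r'}_{\bar R}))+\alpha_{r'r}+\beta_{r'r}\size(T^{r'}_{\bar R})$, the cost of doing the copy first (equation~(\ref{eq:local})). Hence an optimal tree never schedules the copy last, and restricting the minimisation to~(\ref{eq:later}) and~(\ref{eq:local}) loses nothing. Add this inequality and your proof is complete.
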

\begin{proof}
The partition of $P$ into $R$ and $\bar{R}$ in
Equation~(\ref{eq:later}) defines the last subtree in the sequence of
subtrees of $T^r_R$ as in Equation~(\ref{eq:commcost}) of
Definition~\ref{def:treecost}, and assigns the same
cost. Equation~(\ref{eq:local}) defines the place in the sequence of
subtrees of $r$ for performing the local copy by partitioning $P$ into
$R=\{r\}$ and $\bar{R}=R\setminus\{r\}$ and corresponds to
Equation~(\ref{eq:localcopy}). This always puts the subtrees over
$\bar{R}$ after the local copy in the sequence, such that completion
of $T^{r'}_{\bar{R}}$ can be done concurrently with the local copy of
cost $\gamma_rm_r$. An optimal, non-ordered tree must have this
structure, since the case where communication is done first and then
the local copy (which cannot be overlapped) would have cost
$\cost(T^{r'}_{\bar{R}})+\alpha_{r'r}+\beta_{r'r}\size(T^{r'}_{\bar{R}})+\gamma_rm_r$
which is larger than $\max(\gamma_rm_r,\cost(T^{r'}_{\bar{R}})) +
\alpha_{r'r}+\beta_{r'r}\size(T^{r'}_{\bar{R}})$. A tree that is
optimal according to the proposition will therefore also be optimal
according to Definition~\ref{def:treecost}, and conversely.
\end{proof}

The importance of Proposition~\ref{prop:optimal} is in characterizing
optimal completion time trees.
\begin{corollary}
  \label{corr:dynprog}
In the one-ported transmission cost model, optimal completion time
gather and scatter trees exhibit optimal substructure.
\end{corollary}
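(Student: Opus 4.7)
The plan is to give a standard cut-and-paste argument that exploits the recursive form of Proposition~\ref{prop:optimal}. In that recursion, the cost $\cost(T^r_P)$ of a tree whose last-attached subtree is $T^{r'}_{\bar R}$ depends on the internal structure of the two subtrees $T^r_R$ and $T^{r'}_{\bar R}$ only through their respective completion times, together with $\size(T^{r'}_{\bar R})$ which is a function of the set $\bar R$ alone and not of the tree structure over it. The additive communication term $\alpha_{r'r}+\beta_{r'r}\size(T^{r'}_{\bar R})$ is therefore determined entirely by the choice of partition $(R,\bar R)$ and the pair of roots $(r,r')$.

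I would then argue by contradiction. Suppose $T^r_P$ is optimal for $P$ and let $T^r_R$, $T^{r'}_{\bar R}$ be the two subtrees of the decomposition given by Proposition~\ref{prop:optimal}. If, say, $T^r_R$ were not a minimum-cost tree over $R$ rooted at $r$, there would exist a tree $T'^r_R$ with $\cost(T'^r_R)<\cost(T^r_R)$. Substituting $T'^r_R$ for $T^r_R$ keeps the partition $(R,\bar R)$, the roots, and the edge $r'\to r$ unchanged, so by Equation~(\ref{eq:later}) the resulting tree has cost $\max(\cost(T'^r_R),\cost(T^{r'}_{\bar R}))+\alpha_{r'r}+\beta_{r'r}\size(T^{r'}_{\bar R})$, which is no larger than $\cost(T^r_P)$. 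The symmetric argument applies to $T^{r'}_{\bar R}$, and in the boundary case $R=\{r\}$ to Equation~(\ref{eq:local}), where the constant $\gamma_rm_r$ simply plays the role of an immovable term inside the max.

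The only subtlety, and the step that requires a sentence of care, is that a strict improvement in a subtree need not translate into a strict improvement in $\cost(T^r_P)$: the max may continue to be dominated by the other subtree. This does not threaten the conclusion, however. Either the substitution already contradicts optimality of $T^r_P$, or it produces an alternative tree of the same optimal cost in which one child is now optimal for its subproblem. Iterating this substitution on both children yields an optimal tree over $P$ all of whose subtrees are themselves optimal for their own subproblems, which is exactly the optimal substructure property required. This property is what legitimizes the dynamic programming tabulation carried out in the next subsections, where one memoizes $\cost(T^r_R)$ over the relevant families of subsets $R$ and roots $r$.
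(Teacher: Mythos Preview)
Your proposal is correct and follows essentially the same cut-and-paste argument as the paper, which simply observes that if either subtree in the decomposition of Equation~(\ref{eq:later}) were suboptimal, a (possibly) better tree could be obtained by substitution. You are in fact more careful than the paper in handling the case where the $\max$ is dominated by the other subtree, correctly noting that iterated substitution still yields an optimal tree whose subtrees are themselves optimal, which is all that is needed for the dynamic programming tabulation.
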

\begin{proof}
  In order for the completion time of Equation~(\ref{eq:later}) to be
  optimal, both subtrees $T^r_R$ and $T^{r'}_{\bar{R}}$ must be
  optimal; if not a possibly better completion time tree could be found.
\end{proof}
  
Ordered and non-ordered trees are shown in Figure~\ref{fig:trees} to
illustrate Proposition~\ref{prop:optimal} and the following
Proposition~\ref{prop:optimalordered}.

Corollary~\ref{corr:dynprog} states that the problem of constructing
optimal gather and scatter trees can be solved by dynamic
programming. Unfortunately, Proposition~\ref{prop:optimal} seems to imply that
all possible partitions of the set of processors into the subsets $R$
and $\bar{R}$ need to be considered, and for non-ordered trees, this
might indeed be so as Section~\ref{sec:hardness} shows. For (strongly)
ordered trees where $P=[i,i+1,\ldots,j]$ is a consecutive list of
processors, however, only $j-i$ partitions have to be considered,
namely $R=[i,\ldots,k]$ and $\bar{R}=[k+1,\ldots,j]$ for $k=i,\ldots,j-1$.

\begin{proposition}
\label{prop:optimalordered}
An optimal, least cost, ordered communication tree for an irregular
gather or scatter problem over a list of processors
$P=[i,i+1,\ldots,j]$ with at least two processors consists in ordered
subtrees over $[i,\ldots,k]$ and $[k+1,\ldots,j]$ for some $k, i\leq
k<j$ with roots $r\in[i,\ldots,k]$ and $r'\in[k+1,\ldots,j]$, or
$r\in[k+1,\ldots,j]$ and $r'\in[i,\ldots,k]$ that minimizes
$\cost(T^r_P)$ defined by the equations given in
Figure~\ref{fig:optimalordered}.
\end{proposition}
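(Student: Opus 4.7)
The plan is to derive Proposition~\ref{prop:optimalordered} as a restriction of Proposition~\ref{prop:optimal} to the strongly ordered case, in which the admissible partitions of $P = [i,\ldots,j]$ are exactly the prefix/suffix pairs. Proposition~\ref{prop:optimal} already guarantees that an optimal tree $T^r_P$ decomposes into a subtree $T^r_R$ containing the root and a second subtree $T^{r'}_{\bar{R}}$ (the peeled-off last child of $r$ in its sequence of children), with $(R,\bar{R})$ a partition of $P$; Corollary~\ref{corr:dynprog} then ensures that both halves are themselves optimal. What remains is to characterize which partitions $(R,\bar R)$ are realizable by strongly ordered trees and to check that the four equations in Figure~\ref{fig:optimalordered} enumerate the resulting cases.

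The key observation is that in a strongly ordered tree over $[i,\ldots,j]$, the children of the root tile the rank range by consecutive sub-intervals listed in increasing rank order (Definition~\ref{def:ordering}). Removing the edge from $r$ to one of its children therefore cleaves the tree into exactly two strongly ordered subtrees, each spanning a consecutive range. Requiring the two resulting spans to be complementary intervals $[i,\ldots,k]$ and $[k+1,\ldots,j]$ corresponds to either removing the edge to the ``rightmost block'' of children, i.e.,\ a single child of $r$ rooted at some $r' \in [k+1,\ldots,j]$ with $r \in [i,\ldots,k]$, or removing the edge to the ``leftmost block'', i.e.,\ a single child rooted at $r' \in [i,\ldots,k]$ with $r \in [k+1,\ldots,j]$. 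Conversely, given any split point $k$ and strongly ordered subtrees over $[i,\ldots,k]$ and $[k+1,\ldots,j]$, joining them by an edge between their roots again yields a strongly ordered tree over $[i,\ldots,j]$. Substituting these two orientations into Equation~(\ref{eq:later}) of Proposition~\ref{prop:optimal} gives Equations~(\ref{eq:orderleft}) and~(\ref{eq:orderright}).

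The remaining two equations handle the degenerate case in which one ``half'' of the split is only the singleton $T^r_{\{r\}}$, i.e.,\ the root sits at an endpoint of $[i,\ldots,j]$. If $r = i$, the singleton is the smallest-rank child and hence the first in the ordered sequence, so the local copy of cost $\gamma_r m_r$ can be overlapped with the completion of the nonsingleton half $T^{r'}_{[r+1,\ldots,j]}$; this is exactly Equation~(\ref{eq:local}) of Proposition~\ref{prop:optimal} with $R = \{r\}$, yielding Equation~(\ref{eq:copyleft}). If instead $r = j$, the singleton has the largest rank and strong ordering forces it to be the last child, so Equation~(\ref{eq:localcopy}) of Definition~\ref{def:treecost} serializes the local copy after the reception from $T^{r'}_{[i,\ldots,r-1]}$, giving Equation~(\ref{eq:copyright}). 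Optimality of the DP recursion then follows from Corollary~\ref{corr:dynprog} together with the observation that only the $j - i$ split points and the two root-side choices need to be enumerated, so minimizing over the right-hand sides in Figure~\ref{fig:optimalordered} produces an optimal strongly ordered tree. The most delicate point will be making the prefix/suffix step airtight: any cut that sliced through the middle of an ordered block of children would leave a non-consecutive remainder, which cannot be the span of a single strongly ordered subtree, and this is exactly what forces both halves of the decomposition to be intervals.
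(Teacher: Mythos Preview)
Your proof is correct and takes essentially the same approach as the paper's: both specialize Proposition~\ref{prop:optimal} to the ordered case by observing that the admissible partitions are exactly the prefix/suffix pairs $[i,\ldots,k]$ and $[k+1,\ldots,j]$, track whether the root lies in the left or right interval to obtain Equations~(\ref{eq:orderleft}) and~(\ref{eq:orderright}), and explain the asymmetry between Equations~(\ref{eq:copyleft}) and~(\ref{eq:copyright}) via the forced position of the singleton $T^r_{\{r\}}$ in the ordered sequence. Your argument is more explicit than the paper's brief proof on why only interval cuts survive, which is a useful clarification.
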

\begin{proof}
The proposition follows by specialization of
Proposition~\ref{prop:optimal}, keeping track of whether the root is
in the list of processors $[i,\ldots,k]$ or $[k+1,\ldots,j]$.
Equations~(\ref{eq:orderleft}) and~(\ref{eq:orderright})
correspond to Equation~(\ref{eq:later}).
Equations~(\ref{eq:copyleft}) and~(\ref{eq:copyright}) correspond to
Equation~(\ref{eq:local}) but are asymmetric because of the strong
ordering constraints.  When $r\in[r,\ldots,j]$ is the first processor,
the local copy can be done concurrently with completing the tree
$T^{r'}_{[r+1,\ldots,j]}$ after which communication is paid for. When
on the other hand $r\in[i,\ldots,r]$ is the last processor, the data
blocks $[m_i,\ldots,m_{r-1}]$ must be communicated first after which
the local copy can take place.
\end{proof}

\begin{figure*}
  \begin{small}
  \begin{eqnarray*}
  C[i,i,i] & = & 0 \\
  \underset{i<j}{C[i,j,r]} & = & \left\{\begin{array}{cl}
  \min_{i+1\leq r'<j}[\max(\gamma_r m_r,C[i+1,j,r'])+\comm_{r'r}(S[i+1,j])] & 
r=i \\
  \min_{i\leq k<j}\min_{k+1\leq r'\leq j}[\max(C[i,k,r],C[k+1,j,r']) +
    \comm_{r'r}(S[k+1,j])] & 
r\in [i+1,\ldots,k] \\
  \min_{i\leq k<j}\min_{i\leq r'\leq k}[\max(C[i,k,r'],C[k+1,j,r]) +
  \comm_{r'r}(S[i,k])] & 
r\in [k+1,\ldots,j-1] \\
\min_{i\leq r'<j-1}[C[i,j-1,r']+\comm_{r'r}(S[i,j-1])+\gamma_jm_j] &
r=j \\
\end{array}\right.
\end{eqnarray*}
  \end{small}
\caption{Dynamic programming equations for constructing optimal,
  ordered gather under a non-homogeneous communication cost model
  $\comm_{ij}(m)=\alpha_{ij}+\beta_{ij}m$ where communication is from
  a non-root processor $r$ to root processor $r$. For the scatter
  tree, communication cost from root processor $r$ to non-root $r'$ is
  incurred instead.}
\label{fig:optimalequations}
\end{figure*}

\subsubsection*{Discussion}

Proposition~\ref{prop:optimal} and Corollary~\ref{corr:dynprog} do not
seem to extend to the case with $k>1$ communication ports. An optimal
completion time $k$-ported tree is not described by partitioning $P$
into $k+1$ optimal subtrees where $k$ ports of the root in one subtree
gathers or scatters from or to $k$ other subtrees. The $k$ subtrees
must all have completed (as determined by the slowest port of the root
processors in these trees) before communication and be optimal, but
the one tree with the root does not have to be optimal. A slow port at
the root could communicate with a fast subtree, leading to an overall
faster completion time. Constructing optimal trees for the $k>1$
ported case seems more difficult than with only one port, and it is
not known (to the author) whether this problem can be solved as
efficiently.

\subsection{Dynamic programming algorithms}

Proposition~\ref{prop:optimalordered} and Corollary~\ref{corr:dynprog}
show that the problem of finding optimal, ordered gather and scatter
trees for the one-ported communication model can be solved efficiently
by dynamic programming by building up optimal subtrees for larger and
larger ranges of processors $[i,\ldots,j]$. Using the optimality
criteria from Proposition~\ref{prop:optimalordered}, we now develop
the dynamic programming equations for both non-homogeneous and
homogeneous communication costs.

The size of a tree $T^r_{[i,j]}$ over a range of processors
$[i,\ldots,j]$ can be computed in constant time as
$\size(T^r_{[i,\ldots,j]}) = S[j]-S[i-1]$ from precomputed prefix sums
$S[i] = \sum_{j=0}^{i}m_j$ (with $S[-1]=0$).  The completion time of
an ordered tree over a range of processors $[i,\ldots,j]$ is
determined by the completion times of smaller, optimal subtrees over
ranges $[i,\ldots,k]$ and $[k+1,\ldots,j]$ and the communication time
for transmitting between the subtrees. With non-homogeneous
communication costs, transmission times can be different between
different pairs of processors. We let $C[i,j,r]$ denote the completion
time of an optimal gather or scatter tree over processors in the range
$[i,\ldots,j]$ rooted at processor $r\in[i,\ldots,j]$, and
$\comm_{ij}(m)=\alpha_{ij}+\beta_{ij}m$ the cost of communicating a
block of size $m$ in the non-homogeneous transmission cost model, with
in addition $\comm_{ij}(0) = 0$ meaning no communication when no
data. Now, $C[i,j,r]$ can be computed as given by the equations in
Figure~\ref{fig:optimalequations} which follow directly from
Proposition~\ref{prop:optimalordered}, given that the required values
$C[i,k,r]$ and $C[k+1,j,r']$ are available. There is no cost for
singleton trees $C[i,i,i]$. If the root $r$ is the first processor
$i$, there is a local copy cost, concurrently with which the other
tree over processors $[i+1,\ldots,j]$ can be completed, with the
additional cost of the communication between processor $i$ and the
root $r'$ in the best tree over the processors
$[i+1,\ldots,j]$. Similarly if the root $r$ is the last processor $j$
in the range, with the difference that the local copy in this case can
be done only after the communication has taken place. Otherwise, there
are two cases, depending on whether $r\in[i,\ldots,k]$ or
$r\in[k+1,\ldots,j]$. In both, the larger of the completion times for
the best two subtrees have to be paid, in addition to communication
costs between the root processors $r$ and $r'$.  The cost of an
optimal gather or scatter tree over the whole range of processors
$[0,p-1]$ with root $r$ is given by the table entry $C[0,p-1,r]$. If
the overall best tree with root processor $r$ chosen by the algorithm
is needed, this is given by $\min_{0\leq r<p}C[0,p-1,r]$.

The values $C[i,j,r]$ can be maintained in a three-dimensional table
$C$ that can be computed bottom up, step by step increasing the size
of the processor ranges $[i,\ldots,j]$. Each such step requires a pass
over all $i$, all $k,i\leq k<j$, and all $r'\in [i,\ldots,j]$, for a
total of $O(p^4)$ operations over all range increasing
steps. Computing the sizes $S[i,j]$ is done on the fly in constant
time from the precomputed prefix sums. The actual trees can be
constructed by keeping an additional two-dimensional table
$\theroot[i,j]$ that keeps track of the subtree root $r'$ chosen for
the processor range $[i,\ldots,j]$. The trees constructed are not necessarily
binomial, or binary; each subtree may have an arbitrary degree between
$0$ and $p-1$. We say that optimal trees have \emph{variable degree},
and have argued for the following theorem.

\begin{theorem}
\label{thm:optimalnonhomo}
Completion time optimal, variable degree gather and scatter trees for the
irregular gather and scatter problems on $p$ fully connected, one-ported
processors under a linear-time, \emph{non-homogeneous transmission
  cost model} can be computed in $O(p^4)$ operations using $O(p^3)$
space.
\end{theorem}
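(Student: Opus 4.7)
The plan is to turn the recurrence in Figure~\ref{fig:optimalequations} into a bottom-up dynamic programming algorithm, with correctness inherited from Proposition~\ref{prop:optimalordered} and the optimal-substructure statement of Corollary~\ref{corr:dynprog}. Concretely, I would maintain a three-dimensional table $C[i,j,r]$ of $O(p^3)$ entries, indexed by a contiguous range $[i,\ldots,j]$ and a root $r\in[i,\ldots,j]$, that stores the completion time of an optimal strongly ordered tree for the corresponding subproblem, together with a companion $\theroot$ table that records the optimizing split point $k$ and partner root $r'$ so that the tree itself can be reconstructed by backtracking in $O(p)$ steps.

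As preprocessing I would compute the prefix sums $S[i]=\sum_{l=0}^{i}m_l$ (with $S[-1]=0$) in $O(p)$ time, which makes $\size(T^r_{[i,\ldots,j]})=S[j]-S[i-1]$ and hence every non-homogeneous transmission cost $\comm_{r'r}(\size(\cdot))=\alpha_{r'r}+\beta_{r'r}\size(\cdot)$ available in constant time. The base cases $C[i,i,i]=0$ are installed for all singletons, and the table is then filled in increasing order of range length $\ell=j-i+1$. For each range $[i,j]$ and each candidate root $r\in[i,\ldots,j]$ the algorithm evaluates the four subcases: root at the left endpoint (Equation~\ref{eq:copyleft}), root strictly inside the left part $[i+1,k]$ (Equation~\ref{eq:orderleft}), root strictly inside the right part $[k+1,j-1]$ (Equation~\ref{eq:orderright}), and root at the right endpoint (Equation~\ref{eq:copyright}). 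Each case combines two strictly smaller subproblems whose values are already present in $C$. The overall optimum for a freely chosen root is $\min_{0\leq r<p}C[0,p-1,r]$, while for an externally specified root the corresponding single entry is returned.

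For the resource bounds I would organize the work as an outer loop over the range length $\ell$, and inner loops over the left endpoint $i$, the split $k$, and the partner root $r'$ from the opposite side of the split; each single update of the recurrence is constant time given the precomputed sums. Counting operations across all range lengths yields the $O(p^4)$ bound stated in the theorem, while the $O(p^3)$ space is dominated by the table $C$, with $\theroot$ and the prefix sums contributing only lower order terms.

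The only delicate point, and the main obstacle, is the correctness of the case analysis. The four equations are asymmetric because the local copy costs $\gamma_r m_r$ and, by assumption, cannot be overlapped with a communication in which $r$ itself takes part, so it matters whether $r$ sits at a range boundary (where the copy can be hidden behind the remaining subtree's completion, or must be serialized after the last incoming transmission) or strictly in the interior of one of the two halves. I would verify, by direct specialization of Definition~\ref{def:treecost} to the strongly ordered trees of Definition~\ref{def:ordering}, that these four subcases exhaust all admissible configurations without overlap, so that $C[i,j,r]$ indeed equals the true optimum of the corresponding subproblem. Once this is settled, the complexity and the tree reconstruction through $\theroot$ are routine.
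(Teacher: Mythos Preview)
Your proposal is correct and follows essentially the same route as the paper: a bottom-up dynamic program over the three-dimensional table $C[i,j,r]$, filled by increasing range length using the recurrence of Figure~\ref{fig:optimalequations}, with correctness inherited from Proposition~\ref{prop:optimalordered} and Corollary~\ref{corr:dynprog}, prefix sums for constant-time $\size$ lookups, and a $\theroot$ table for reconstruction. The paper's own proof is in fact terser than yours; it simply cites Proposition~\ref{prop:optimalordered} for optimality and counts the loop nesting for the $O(p^4)$/$O(p^3)$ bounds, so your added discussion of the four subcases and the asymmetry of the local copy is extra elaboration rather than a different argument.
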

\begin{proof}
Optimality follows from Proposition~\ref{prop:optimalordered}, and the
complexity from the dynamic programming construction of the
three-dimensional table $C$. Each part of the equations in
Figure~\ref{fig:optimalequations} takes at most $O(p^2)$ time, and the
table can be constructed by two nested loops of at most $p$ iterations
each.
\end{proof}

\begin{figure*}
  \begin{small}
\begin{eqnarray*}
  C[i,i] & = & 0 \\
  C[i,i+1] & = & \min\left\{
  \begin{array}{l}
    \max(\gamma m_i,C[i+1,i+1])+\comm(S[i+1,i+1])
    \\
    C[i,i]+\comm(S[i,i])+\gamma m_j
    \\
  \end{array}
  \right\} \\
  \underset{i+1<j}{C[i,j]} & = & \min_{i\leq k<j}[\max(C[i,k],C[k+1,j])+\min(\comm(S[i,k]),\comm(S[k+1,j]))]
\end{eqnarray*}
  \end{small}
\caption{Dynamic programming equations for constructing optimal,
  ordered gather and scatter trees under a homogeneous communication
  cost model $\comm(m)=\alpha+\beta m$.}
\label{fig:homooptimal}
\end{figure*}

For non-homogeneous communication systems, when two subtrees over a
range of processors communicate, it is necessary to consider all
possible best rooted trees over the range, since the communication
cost from the different roots may differ. For homogeneous systems,
where the $\alpha$ and $\beta$ parameters are the same for all pairs
of processors, this is not the case, and it suffices to keep track of
the overall best rooted tree for each processor range $[i,\ldots,j]$
instead of each possible root in $[i,\ldots,j]$.  This leads to the
simplified, dynamic programming equations given in
Figure~\ref{fig:homooptimal} which find a least cost gather or scatter
tree for a best possible root. Since only a best possible root needs
to be maintained for each range, this lowers the overall complexity by
a factor of $p$ to $O(p^3)$ operations, leading to the following
theorem.

\begin{theorem}
\label{thm:optimalhomo}
Completion optimal variable degree gather and scatter trees for the
irregular gather and scatter problems on $p$ fully connected, one-ported
processors under a linear-time, homogeneous transmission cost model
can be computed in $O(p^3)$ operations using $O(p^2)$ space.
\end{theorem}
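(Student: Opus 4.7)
The plan is to adapt the argument from Theorem~\ref{thm:optimalnonhomo} by observing that in a homogeneous system the communication cost $\comm(m)=\alpha+\beta m$ no longer depends on which pair of processors communicates. Consequently, for any processor range $[i,\ldots,j]$ we need not record which specific processor is the root of the optimal subtree over that range; only the completion time matters when that subtree is later merged with a sibling. This collapses the three-dimensional table $C[i,j,r]$ of Theorem~\ref{thm:optimalnonhomo} into the two-dimensional table $C[i,j]$ of Figure~\ref{fig:homooptimal}, saving a factor of $p$ in both time and space.

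For correctness, I would first verify the general recursion. Suppose an optimal ordered tree over $[i,\ldots,j]$ splits into optimal subtrees over $[i,\ldots,k]$ and $[k+1,\ldots,j]$ with roots $r$ and $r'$, exactly one of which is the final root of the combined tree. Because the system is homogeneous, the completion time of each subtree is independent of where its root sits, and the communication cost paid when merging is $\comm(\size(T^{r'}))$ if $r'$ sends to $r$, or $\comm(\size(T^r))$ otherwise. Since the root of each subtree is free to be chosen, we pick the root of the smaller subtree as sender, giving the $\min(\comm(S[i,k]),\comm(S[k+1,j]))$ term; combining with the $\max$ of subtree completion times (the slower subtree must be ready before merging) recovers the recursion. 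Optimal substructure from Corollary~\ref{corr:dynprog} then guarantees that the recursion produces globally optimal trees. The explicit base cases $C[i,i]=0$ and $C[i,i+1]$ correctly embed the local copy cost $\gamma m_r$ at the eventual root in analogy with Equations~(\ref{eq:copyleft})--(\ref{eq:copyright}), either concurrently with the singleton sibling's trivial completion or after it, whichever is cheaper.

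For complexity, the table $C$ has $O(p^2)$ entries. Each entry is computed by minimizing over at most $p$ split positions $k$, and for each split the quantities $\comm(S[i,k])$ and $\comm(S[k+1,j])$ are evaluated in $O(1)$ using the precomputed prefix sums $S[\cdot]$. Iterating by increasing range length $\ell=j-i$ ensures that whenever an entry $C[i,j]$ is computed, all values $C[i,k]$ and $C[k+1,j]$ it depends on are already in the table. Total work is thus $O(p^3)$ and storage is $O(p^2)$, including an auxiliary $O(p^2)$ table that records the splitting position chosen for each range, from which the actual tree and its root can be reconstructed by a top-down traversal in $O(p)$ time.

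The main obstacle will be justifying that dropping the root index is legitimate: one must check that no subtle interaction between the local copy cost $\gamma m_r$ at the final root and the choice of which side sends is lost when we replace $C[i,j,r]$ by $C[i,j]$. A case analysis along the lines of the proof of Proposition~\ref{prop:optimalordered}---distinguishing whether the final root lies in $[i,\ldots,k]$ or $[k+1,\ldots,j]$---shows that the local copy is absorbed entirely into the recursive call on the side containing the root and plays no role in the choice of sender across the merge, so the simplified recursion in Figure~\ref{fig:homooptimal} indeed computes the same optimum as the full non-homogeneous recursion specialized to a homogeneous cost.
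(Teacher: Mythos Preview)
Your proposal is correct and follows essentially the same approach as the paper: exploit that in the homogeneous model $\comm(m)=\alpha+\beta m$ is independent of the communicating pair, so the root index $r$ in $C[i,j,r]$ can be dropped and only the best completion time $C[i,j]=\min_r C[i,j,r]$ over each range needs to be tabulated, yielding the $O(p^3)$ time, $O(p^2)$ space bounds via the recursion of Figure~\ref{fig:homooptimal}. The paper's own proof is extremely terse (it simply appeals to Proposition~\ref{prop:optimalordered} for correctness and to standard dynamic programming for the complexity), whereas you spell out why the collapse is sound---in particular the identity $\min_{r,r'}\max(C[i,k,r],C[k+1,j,r'])=\max(C[i,k],C[k+1,j])$ that underlies the simplification, and you rightly flag the local-copy term as the one place requiring care---so your write-up is, if anything, more complete than the paper's.
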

\begin{proof}
The correctness of the equations in Figure~\ref{fig:homooptimal}
follow from Proposition~\ref{prop:optimalordered}.  The
two-dimensional $C$ table is constructed by a standard, dynamic
programming algorithm in $O(p^3)$ operations.
\end{proof}

We note that by explicitly keeping track of the root chosen for each
range $[i,\ldots,j]$ in a separate, two-dimensional table
$\theroot[i,j]$, the computed trees can easily be reconstructed as is
a standard dynamic programming technique, see,
\eg,~\cite{CormenLeisersonRivestStein09}. For computing a tree rooted
at an externally given root $r$, the equations have to be modified so
that communication is always with this root for processor ranges
$[i,\ldots,j]$ with $r\in [i,\ldots,j]$. Also this is straightforward.

\begin{figure*}
  \begin{small}
  \begin{eqnarray*}
  C[i,i] & = & 0 \\
  \underset{i<j}{C[i,j]} & = & \min\left\{
  \begin{array}{l}
    \min_{i<k<j}\left[\max(\max(\gamma m_i,C[i+1,k])+\comm(S[i+1,k]),C[k+1,j])+\comm(S[k+1,j])\right]
    \\
    \min_{i<k<j}\left[\max(C[i,k-1]+\comm(S[i,k-1])+\gamma m_k,C[k+1,j])+\comm(S[k+1,j])\right]
    \\ 
    \min_{i\leq k<j-1}\left[\max(C[i,k]+\comm(S[i,k]),C[k+1,j-1])+\comm(S[k+1,j-1])+\gamma m_j\right]
    \\ 
    \max(\gamma m_i,C[i+1,j])+\comm(S[i+1,j])
    \\
    C[i,j-1]+\comm(S[i,j-1])+\gamma m_i
    \\
  \end{array}
  \right\}
\end{eqnarray*}
  \end{small}
\caption{Dynamic programming equations for constructing cost optimal,
  ordered binary trees under a homogeneous transmission cost model
  $\comm(m)=\alpha+\beta m$.}
\label{fig:binaryequations}
\end{figure*}

\subsection{Dynamic programming for other trees}

Binary (or other fixed-degree trees) are sometimes used for
implementing rooted collective operations like broadcast and
reduction. Dynamic programming can likewise be used to compute optimal
binary trees for these problems. For completeness (and because this
could be relevant for, \eg, sparse reduction problems, or for
exploring the structure of optimal trees) we state the corresponding
dynamic programming equations. There are five cases to consider. The
root of an ordered, binary tree is either ``at the left'', ``at the
right'' or somewhere ``in the middle'', or the root has only one
child, either left or right. The equations assuming a homogeneous cost
model are given in Figure~\ref{fig:binaryequations}. Also here, a best
possible root is chosen by the algorithm; it is easy to modify to the
case where the root is externally given; here only the last extension
step filling in table entry $C[0,p-1]$ needs to be adapted to sending
to the fixed root $r$ and choose the best subtrees out of the five
possible cases.  This gives the following theorem.
\begin{theorem}
\label{thm:binaryordered}
Completion time optimal binary gather and scatter trees for the
irregular gather and scatter problems on $p$ fully connected, one-ported
processors under a linear-time, homogeneous transmission cost model
can be computed in $O(p^3)$ operations ($O(p^4)$ for non-homogeneous
communication costs) and $O(p^2)$ space ($O(p^3)$ for non-homogeneous
communication costs).
\end{theorem}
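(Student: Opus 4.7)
The plan is to mirror the argument for Theorem~\ref{thm:optimalhomo}, specializing Proposition~\ref{prop:optimalordered} to the setting where every subtree is required to be binary. Optimal substructure carries over without change: the swap argument from the proof of Proposition~\ref{prop:optimal} replaces a single subtree by a cheaper one and preserves the binary shape of the surrounding tree, so any subtree of an optimal ordered binary tree is itself an optimal ordered binary tree over its range. It then suffices to verify that the recurrence in Figure~\ref{fig:binaryequations} correctly enumerates all structural possibilities for a strongly ordered binary tree on $[i,j]$ and assigns the right completion time to each case.

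I would proceed by case analysis on the position of the root $r$ within $[i,j]$ and on the number of children (one or two). Since a binary tree has at most two children and strong ordering (Definition~\ref{def:ordering}) forces their subtree ranges to form consecutive rank-ordered intervals partitioning $[i,j]\setminus\{r\}$, the only possibilities are: $r=i$ with a single right child over $[i+1,j]$ (case 4), or with two right children split at some $k$ (case 1); symmetrically $r=j$ with a single left child over $[i,j-1]$ (case 5), or with two left children split at $k$ (case 3); and $i<r<j$ with exactly two children over $[i,r-1]$ and $[r+1,j]$, where the split parameter $k$ in Figure~\ref{fig:binaryequations} is forced to equal $r$ (case 2). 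Strong ordering also pins the local-copy singleton $\{r\}$ to its natural position in the subtree sequence---first when $r=i$, last when $r=j$, and between the two children when $r$ is interior---and under the one-ported sequencing of Figure~\ref{fig:treecost} this reproduces the explicit expressions in Figure~\ref{fig:binaryequations}: the $\max(\gamma m_i,\cdot)$ overlap in cases 1 and 4, the trailing $+\gamma m_j$ in cases 3 and 5, and the $+\gamma m_k$ sandwiched between the two communications in case 2.

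For complexity, the homogeneous table has $O(p^2)$ entries, each computed as an $O(p)$ minimum over split positions $k$, for $O(p^3)$ operations and $O(p^2)$ space; auxiliary \texttt{root} and case-index tables enable the standard $O(p)$ tree reconstruction. For the non-homogeneous extension the state expands to $C[i,j,r]$ with $O(p^3)$ entries, as in Theorem~\ref{thm:optimalnonhomo}; by monotonicity of $\max$ the two child-root minimizations in each case factor independently (replacing one minimizer by the unconstrained global one cannot worsen the outer max), so each entry still admits $O(p)$ amortized work after precomputing, per subrange and target processor, the best child-root contribution, for a total of $O(p^4)$ operations and $O(p^3)$ space. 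The only nontrivial point, and the main obstacle, is justifying the rigid placement of the local copy in case 2---the singleton $\{r\}$ cannot be moved to the front of the child sequence to overlap with internal gathering of the left child, as would be permitted for non-ordered trees via Equation~(\ref{eq:local})---but this is a structural consequence of Definition~\ref{def:ordering} rather than a suboptimality in the recurrence.
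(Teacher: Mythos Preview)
Your proposal is correct and follows the same route as the paper, which in fact offers no proof beyond naming the five structural cases (root at the left, at the right, or in the middle, each with one or two children) and writing down the recurrence of Figure~\ref{fig:binaryequations}. Your case analysis matches, and your observation that strong ordering pins the local-copy singleton to a fixed position in the child sequence---in particular forcing it between the two communications in case~2 rather than allowing the overlap of Equation~(\ref{eq:local})---is exactly the point behind Equations~(\ref{eq:copyleft}) and~(\ref{eq:copyright}) specialized to binary shape; the paper leaves this implicit. Your treatment of the non-homogeneous bound via the factoring $\min_{r',r''}[\max(A(r'),B(r''))+D(r'')]=\min_{r''}[\max(\min_{r'}A(r'),B(r''))+D(r'')]$ is a genuine addition, since the paper only states the $O(p^4)$ figure in parentheses without argument; note that for the endpoint-root cases the per-entry work is $O(p^2)$ (minimization over both the split $k$ and the far child's root), not $O(p)$, but there are only $O(p^2)$ such entries so the total is still $O(p^4)$, and your ``amortized'' phrasing covers this.
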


By the same techniques, optimal, ordered trees for broadcast and
reduction can likewise be computed. The only difference is that the
size for each node in such trees are the same, namely $S[i,j]=m$ for
all $i\leq j$. Note that the trees computed in this way are special in
the sense that the subtrees of interior nodes span consecutive
ranges of processors $[i,\ldots,j]$. As Section~\ref{sec:hardness}
will show, not all optimal broadcast trees have this
structure. However, this property is useful for reduction operations,
where binary reduction operations may have to be performed in
processor rank order (unless the operation is commutative); MPI for
instance has such constraints.

\begin{theorem}
Cost-optimal, ordered broadcast and reduction trees on $p$ fully
connected, one-ported processors under a linear-time, non-homogeneous
transmission cost model can be computed in $O(p^4)$ operations and
$O(p^3)$ space (respectively $O(p^3)$ time and $O(p^2)$ space under
homogeneous costs).
\end{theorem}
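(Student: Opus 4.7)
The plan is to mirror the dynamic programming construction of Theorems~\ref{thm:optimalnonhomo} and~\ref{thm:optimalhomo} after a single, simplifying observation: in broadcast and in reduction, the data crossing every edge of the tree is one block of fixed size $m$, independent of the subtree spanned. Consequently, in the recurrences of Figure~\ref{fig:optimalequations} every occurrence of $\comm_{r'r}(S[\cdot,\cdot])$ is replaced by the constant $\comm_{r'r}(m) = \alpha_{r'r}+\beta_{r'r}m$. First I would restate the optimal-substructure analogue of Proposition~\ref{prop:optimal}: in a one-ported ordered broadcast tree the root $r'$ of a subtree $T^{r'}_{\bar R}$ cannot begin forwarding into $\bar R$ until it has received the block from its parent $r$, so $\cost(T^r_P) = \max(\cost(T^r_R),\cost(T^{r'}_{\bar R})+\comm_{r'r}(m))$ for broadcast (and symmetrically for reduction, where $r$ cannot forward upward until $r'$ has reduced its subtree); the usual exchange argument shows the chosen subtrees must themselves be optimal.

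Next I would write down the actual recurrences. Let $C[i,j,r]$ denote the completion time of an optimal ordered broadcast (or reduction) tree over the range $[i,\ldots,j]$ rooted at $r\in[i,\ldots,j]$. Since there is no per-block local copy to sequence (the block is already at the broadcast root, and the reduction result is produced in place at the root through the combining operation), only the two structural cases of Proposition~\ref{prop:optimalordered} survive, namely
\begin{eqnarray*}
C[i,j,r] & = & \min_{i\leq k<j}\min_{k+1\leq r'\leq j}\bigl[\max(C[i,k,r],C[k+1,j,r']) + \comm_{r'r}(m)\bigr]\quad\text{if }r\in[i,\ldots,k], \\
C[i,j,r] & = & \min_{i\leq k<j}\min_{i\leq r'\leq k}\bigl[\max(C[i,k,r'],C[k+1,j,r]) + \comm_{r'r}(m)\bigr]\quad\text{if }r\in[k+1,\ldots,j],
\end{eqnarray*}
with $C[i,i,i]=0$. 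The cost of an optimal tree with algorithm-chosen root is $\min_{0\leq r<p}C[0,p-1,r]$; for an externally fixed root the outer minimum is dropped. Trees are recovered by storing the argmin $(k,r')$ in an auxiliary table $\theroot[i,j,r]$, exactly as in the gather/scatter case.

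The complexity analysis then carries over verbatim from the proofs of Theorems~\ref{thm:optimalnonhomo} and~\ref{thm:optimalhomo}. For non-homogeneous costs, filling the three-dimensional table $C[\cdot,\cdot,\cdot]$ bottom up over increasing ranges requires $O(p^2)$ work per entry (ranging over $k$ and $r'$), yielding $O(p^4)$ operations in $O(p^3)$ space. For homogeneous costs, $\comm_{r'r}(m)=\alpha+\beta m$ is independent of the endpoint pair, so it suffices to keep a two-dimensional table $C[i,j]$ tracking a best-possible-root tree over $[i,\ldots,j]$; the recurrence collapses to $C[i,j]=\min_{i\leq k<j}[\max(C[i,k],C[k+1,j])+\alpha+\beta m]$, reducing the cost to $O(p^3)$ time and $O(p^2)$ space.

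The only delicate point, which I would flag briefly rather than belabor, is justifying that the ordered-tree framework is the right one for reduction: since the recurrence has each interior node combine partial results received from consecutive, ordered child ranges, the operands arrive in rank order and the final result at the root respects the operand ordering required by a potentially non-commutative reduction operator (as mandated, for example, by MPI). This is precisely the structural property emphasized in the discussion following Theorem~\ref{thm:binaryordered}, so no further argument beyond the adapted recurrence is required.
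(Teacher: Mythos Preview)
Your approach is essentially the paper's own: the paper does not give a separate proof for this theorem but simply remarks that ``the only difference is that the size for each node in such trees are the same, namely $S[i,j]=m$ for all $i\leq j$,'' and invokes the same dynamic programming construction. Your DP recurrences and complexity bookkeeping reproduce exactly this.

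One small inconsistency to fix: in your prose restatement of the optimal-substructure equation you write $\cost(T^r_P) = \max(\cost(T^r_R),\cost(T^{r'}_{\bar R})+\comm_{r'r}(m))$, with the communication term inside the $\max$. In the one-ported model this is not right: the root $r$ is occupied during the transmission to $r'$ and only afterwards can continue broadcasting into $R$, so the cost is $\comm_{r'r}(m)+\max(\cost(T^r_R),\cost(T^{r'}_{\bar R}))$, with $\comm$ outside the $\max$ --- which is precisely what your DP recurrences (and the paper's Equation~(\ref{eq:later}) with $\size$ replaced by $m$) actually say. So the recurrences are correct; only the motivating sentence needs its parenthesization adjusted.
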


\subsubsection*{Discussion}

The dynamic programming algorithms are hardly practically relevant,
unless trees can be precomputed and reused many times in persistent
communications, or unless problems are so large that $m$ is
$\Omega(p^4)$ (or $\Omega(p^3)$) and actual communication time offsets
the tree construction time. Furthermore, the constructions are
offline, and require full knowledge of the block sizes for all
processors. In MPI and other interfaces, only the root processor has
this information. A schedule could be computed at the root, and sent
to the other processors (with at least a linear time communication
overhead), an idea that was attempted in~\cite{Traff04:gatscat}.

\section{Hardness of the Non-Ordered Constructions}
\label{sec:hardness}

The ordering constraint on communication trees made polynomial time
constructions of optimal gather and scatter trees possible. The
following theorems show that finding cost-optimal trees when subtrees
are not required to be ordered is a different, harder problem. The
first and second are easy observations.

\begin{theorem}
Constructing optimal, minimum completion time broadcast trees in the
non-homogeneous, fully-connected, one-ported, linear transmission cost
model is NP-hard.
\end{theorem}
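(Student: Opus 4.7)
The plan is to reduce from the classical \emph{Minimum Broadcast Time} problem, which is well known to be NP-complete (Garey and Johnson, attributed to Slater, Cockayne and Hedetniemi). In that problem one is given an undirected graph $G=(V,E)$, a source $s\in V$, and an integer $T$, and must decide whether the message at $s$ can be disseminated to all vertices in $T$ discrete rounds, where in each round every already informed vertex may call at most one of its neighbors. This matches almost verbatim the structure of broadcast in our one-ported model once non-homogeneous latencies are used to encode graph adjacency.

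The reduction I would give is as follows. From the instance $(G,s,T)$, build a processor set $P=V$ with designated root $s$, a broadcast message of size $m=0$ (or $\beta_{ij}=0$ for all pairs, whichever is cleaner for the cost model), and set $\alpha_{ij}=1$ whenever $\{i,j\}\in E$ and $\alpha_{ij}=T+1$ otherwise. Then ask whether an optimal broadcast tree has completion time at most $T$. The intuition is that any tree of completion time $\leq T$ can only use arcs that are graph edges of $G$, because a single non-edge arc already blows the bound; and the completion time of a tree built from edges in $G$, with unit $\alpha$, is exactly the depth of that tree in the ``each informed node serves its children one after the other'' schedule imposed by the one-ported, asynchronous cost from Equation~(\ref{eq:commcost}) (adapted for broadcast).

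The correctness argument then has two directions. For the forward direction, given a round schedule of broadcast time $\leq T$ on $G$, orient each call from caller to callee to obtain a spanning tree; order the children of each node by the round in which it calls them. A straightforward induction on round number shows that under our cost equations this tree informs every vertex $v$ no later than time equal to the round in which $v$ is informed classically, so the completion time is $\leq T$. For the converse, given an optimal tree of completion time $\leq T$, every arc used must have $\alpha=1$ (hence be a graph edge), and the ordered-children schedule implicit in the tree cost gives a valid round-by-round classical broadcast schedule of length $\leq T$. Since the reduction is polynomial and the encoding of $\alpha_{ij}\in\{1,T+1\}$ is of polynomial size, NP-hardness follows.

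The only step that requires some care is showing that the continuous-time, asynchronous, one-ported cost of Definition~\ref{def:treecost} (in its broadcast analogue) truly coincides with the discrete round model on unit-latency trees: I would formalize this by induction on the tree, noting that a node informed at integer time $t$ with $d$ ordered children informs its $j$th child at time $t+j$, so the recursion of ``max of completion times plus $\alpha$'' reproduces the classical broadcast-time recurrence. This equivalence, rather than the reduction itself, is the main (mild) obstacle; once it is established the NP-hardness is immediate.
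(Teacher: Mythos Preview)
Your proposal is correct and follows essentially the same route as the paper: a direct reduction from the Minimum Broadcast Time problem by encoding graph adjacency in the non-homogeneous latencies $\alpha_{ij}$. The only cosmetic differences are that the paper sets $\alpha_{ij}=\infty$ for non-edges (you use the finite value $T+1$) and takes $m=1$, $\beta_{ij}=1$ (you take $m=0$ or $\beta_{ij}=0$); your choices are arguably cleaner, and your explicit treatment of the continuous-versus-discrete equivalence spells out a detail the paper leaves implicit.
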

This is a simple observation and reduction from the Minimum Broadcast
Time problem, see~\cite[ND49]{GareyJohnson79}
and~\cite{SlaterCockayneHedetniemi81}. The Minimum Broadcast Time
problem is, for a given unweighted, undirected graph $G$ and root
vertex $u$ to find the minimum number of communication rounds required
to broadcast a unit size message from $u$ to all other vertices in
$G$, where in each communication round, a vertex can send a message to
some other vertex along an edge of $G$. We take $\alpha_{uv}=1$ for
all edges $uv$ of $G$, and $\alpha_{u'v'}=\infty$ for edges $u'v'$ not
in $G$, $m=1$ and all $\beta_{uv}=1$. An optimal broadcast tree in the
non-homogeneous cost model would be a solution to the Minimum
Broadcast Time problem, and vice versa.

As indicated in Section~\ref{sec:lowerbounds} with $k$-ported
communication, $k>1$, an optimal algorithm would entail solving a hard
packing problem.

\begin{theorem}
Finding an optimal solution to the irregular gather and scatter
problems in the $k$-ported, homogeneous, fully connected, linear
transmission cost model is NP-hard for $k>1$.
\end{theorem}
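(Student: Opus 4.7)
The plan is to reduce from PARTITION: given positive integers $a_1,\ldots,a_n$ summing to $2B$, I construct a scatter instance with $p=n+1$ processors, root $r=0$ carrying block $m_0=0$, and leaves with $m_i=a_i$ for $i=1,\ldots,n$; I set $\alpha=0$, $\beta=1$, $\gamma=0$ and $k=2$ ports. I claim that the optimum completion time is at most $B$ if and only if the PARTITION instance admits a balanced bipartition.

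For the easy direction, a subset $S$ with $\sum_{i\in S} a_i=B$ yields a flat star tree in which the root's two ports serve $S$ and its complement consecutively; each port then takes time exactly $B$, and all leaves finish instantly upon receipt. For the converse, observe that in any tree rooted at $r$, each leaf block is routed through exactly one of the root's two ports, inducing a bipartition $(P_0,P_1)$ of $\{1,\ldots,n\}$. Since blocks cannot be split and $\alpha=0$, the time spent on port $j$ is at least $\beta \sum_{i\in P_j} m_i$: nesting the leaves inside deeper subtrees cannot reduce the total number of bytes that still must flow out of that root port. Hence the completion time is at least $\max_j \sum_{i\in P_j} m_i$, which is strictly greater than $B$ whenever no exact bipartition exists. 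This is precisely the packing lower bound already flagged in Section~\ref{sec:lowerbounds}.

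For general $k>2$ I would replace PARTITION with MULTIPROCESSOR SCHEDULING on $k$ identical machines (equivalently, $k$-way makespan minimization), which is NP-hard for every fixed $k\geq 2$; the construction is identical, with the $n$ leaf blocks distributed across $k$ root ports. The gather variant is symmetric, because the same port-occupancy argument applies to reception at the root: each root port must receive every byte originating in the subtrees it serves, so the same makespan lower bound governs completion time.

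The main obstacle is justifying the port-occupancy bound cleanly for arbitrary nested trees. Concretely, when a root port serves a subtree rooted at a child $c$, the port is busy for time at least $\beta$ times the total size of that subtree, even though $c$ itself has $k>1$ ports and can internally parallelize further scattering: each leaf block inside $c$'s subtree must cross the root-to-$c$ link exactly once, blocks cannot be split, and with $\alpha=0$ the link cost decomposes additively over blocks. A secondary sanity check is that the reduction respects the tree model of Definition~\ref{def:gatherscattertrees}, which it does because block splitting is forbidden on both sides. Polynomiality of the reduction and integrality of the $a_i$ then finish the argument.
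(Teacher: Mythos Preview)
Your proposal is correct and follows essentially the same route as the paper: reduce from Multiprocessor Scheduling (PARTITION being the $k=2$ case) by setting $\alpha=0$, $\beta=1$, $\gamma=0$, root block $m_r=0$, and leaf blocks equal to job lengths, so that completion time equals makespan. Your port-occupancy lower bound is a slightly more explicit justification of what the paper states in one line (``with no latency, optimal gather and scatter trees are stars''), but the argument is the same in substance.
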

This is a simple reduction from Multiprocessor
Scheduling~\cite[SS8]{GareyJohnson79}. Given an instance of the
scheduling problem, we construct a gather scatter operation over $p$
processors where $p$ is equal to the number of jobs, plus one extra
root processor $r$. We take $\alpha_{ij}=0$ and $\beta_{ij}=1$ for all
processor pairs $i,j$, with $k$ equal to the number of machines, and data
block $m_i$ equal to length of the $i$th job and $m_r=0$. With no
latency, optimal gather and scatter trees are stars rooted at $r$, and the
completion time of the gather/scatter algorithm which is the time for
the last communication port to finish corresponds directly to the
scheduling deadline.

More interestingly, and less obvious, even in the one-ported,
homogeneous communication cost model, finding non-ordered gather and
scatter trees remains hard. The details are worked out in the proof.

\begin{theorem}
\label{thm:hardness}
Finding non-ordered, cost-optimal communication trees for irregular
gather and scatter operations in the homogeneous, fully connected,
one-ported, linear transmission cost model is NP-hard.
\end{theorem}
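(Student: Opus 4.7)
The plan is to reduce from a known NP-hard packing or scheduling problem such as 3-PARTITION or BIN PACKING, encoding its instance into an irregular gather problem via a judicious choice of block sizes together with a few auxiliary ``relay'' processors whose role is to force a canonical two-level grouping in any optimal tree. Membership in NP is immediate: given a candidate non-ordered tree together with the ordering of children at each internal node, the recurrence of Definition~\ref{def:treecost} allows its completion time to be computed in $O(p)$ steps, so the decision version ``does some non-ordered tree have cost at most $T_0$?'' is in NP.

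For the reduction itself, given (say) a 3-PARTITION instance with $3n$ items $a_1,\ldots,a_{3n}$ satisfying $B/4<a_i<B/2$ and $\sum a_i=nB$, I would build a gather instance consisting of a root $r$, $3n$ leaf processors $p_i$ with block size $m_{p_i}=a_i$, and $n$ relay processors with empty (or small, uniform) blocks. Homogeneous $\alpha,\beta>0$ would be chosen in a regime that balances latency against bandwidth, and the threshold $T_0$ would be set equal to the completion time of the canonical two-level tree---root with the $n$ relays as children, each relay with exactly three item-leaves summing to $B$---which by the recurrence of Figure~\ref{fig:treecost} evaluates to $(n+3)\alpha+(n+1)\beta B$. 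The easy direction, ``YES 3-PARTITION instance implies a tree of cost at most $T_0$,'' is then a direct evaluation.

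The main work, and where the real difficulty lies, is the converse: showing that every non-ordered tree of cost at most $T_0$ encodes a valid 3-partition. I would argue this through a chain of exchange-type structural lemmas exploiting the one-ported, homogeneous model, in which the root's completion time takes the form $\max_j T_{c_j}+m\alpha+\beta(nB)$ for $m$ root-children, since the total data delivered to the root is fixed at $nB$. The key steps are (i) any chain of relays can be flattened, because an outer relay whose sole role is to forward an inner subtree only adds an $\alpha$ and a repeated $\beta\cdot s$ transmission without benefit, so optimal trees have depth at most two; (ii) each bundle of items attached to a single relay has items-sum at most $B$, because otherwise the relay-subtree's own completion already forces the root past the threshold; (iii) combined with the rigid 3-PARTITION bounds $a_i>B/4$, each relay can hold at most three items, so at least $n$ relays are needed; and (iv) any additional relay costs an extra $\alpha$ at the root against a tight budget, ruling out $m>n$. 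Together these force each relay to hold exactly three items summing to exactly $B$---a valid 3-partition.

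The principal obstacle is the simultaneous tuning of $\alpha$ and $\beta$ so that the slack in each of the above inequalities vanishes, and in particular so that degenerate structures such as ``all $3n$ items as direct children of the root'' or ``fewer but heavier relays'' are both strictly worse than $T_0$. If the two-parameter family $(\alpha,\beta)$ does not suffice, the construction can be strengthened by inserting a gadget of ``padding'' processors with judiciously chosen block sizes that absorb the remaining slack. Once this rigidity is established, the reduction is sound in both directions, and by the gather/scatter duality noted in Section~2 the same construction yields hardness for the scatter problem as well.
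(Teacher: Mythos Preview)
Your proposal has a concrete gap in the reverse direction of the reduction, and it is precisely the step you flag as the ``principal obstacle.'' Your claimed threshold $T_0=(n+3)\alpha+(n+1)\beta B$ for the canonical two-level tree is not minimal, and your rigidity step (ii) is false as stated. Take any tree in which one relay gathers three items summing to $B-1$, another gathers three items summing to $B+1$, and the remaining $n-2$ relays are balanced at $B$. If the root receives the light relay first and the heavy relay second, the recurrence gives
\[
C_2=\max\bigl(4\alpha+2\beta(B-1),\,3\alpha+\beta(B+1)\bigr)+\alpha+\beta(B+1)=5\alpha+3\beta B-\beta
\]
whenever $B\geq 3$, and continuing with the balanced relays yields $C_n=(n+3)\alpha+(n+1)\beta B-\beta=T_0-\beta<T_0$. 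So unbalanced groupings can be \emph{cheaper}, not dearer: the root's total bandwidth $n\beta B$ is fixed regardless of how items are grouped, and imbalance merely reshuffles waiting time in a way that can help. This means meeting your threshold does not force a valid 3-partition, and worse, different YES instances have different optima depending on which near-partitions their items admit, so there is no single clean threshold at all. No tuning of $\alpha,\beta$ alone repairs this; the difficulty you defer is fatal to the construction, not a detail to be filled in later.

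The paper's proof takes a fundamentally different route that sidesteps this issue. It reduces from PARTITION and, crucially, exploits the local copy parameter with $\gamma=\beta=1$. Rather than empty relays, it introduces two \emph{large} auxiliary blocks of sizes $m/2$ and $m/2{+}1$; the time each of their processors spends on its own local copy creates an exact window during which a subtree over one half of the partition can be gathered concurrently at zero marginal cost. The target time $2m+2+2\alpha$ is achieved if and only if both subtrees fit inside their respective windows, which forces the two halves to sum to exactly $m/2$ each. The missing idea in your attempt is this use of $\gamma>0$ together with large dummy blocks as precise ``timers'' against which the partition must balance; with empty relays there is no comparable mechanism, and the cost function is too forgiving of imbalance. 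Your observation about NP membership is fine and matches the paper.
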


\begin{proof}
The claim is by reduction from the PARTITION
problem~\cite[SP12]{GareyJohnson79} to the problem of constructing an
optimal gather tree.

An instance of the PARTITION problem is a set of positive integers,
$m_i, i=0,\ldots,p-1, m_i>0$, with $m=\sum_{i=0}^{p-1}m_i$ even. The
problem is to determine whether there is a subset $R$ of
$\{0,\ldots,p-1\}$ with $\sum_{i\in R}m_i=m/2$.  The problem is
trivial if there is some $m_i$ with $m_i\geq m/2$, so we assume that
for all $m_i, m_i<m/2$. For any subset $R$ of $\{0,\ldots,p-1\}$ we
let $\bar{R}$ denote the complement $\{0,\ldots,p-1\}\setminus R$.

For the cost model we take $\gamma=\beta=1$, meaning that there is a
local copy cost for each processor for its own data block. We consider
gather tree constructions where the root processor is chosen by the
algorithm. By Proposition~\ref{prop:linear} we know that under these
assumptions, an ordered gather tree with cost $\ceiling{\log_2
  p}\alpha+\sum_{i=0}^{p-1}m_i$ over the integers in the PARTITION
instance can be computed in $O(p)$ time steps.

With $p$ the number of integers in the PARTITION instance, we define
$\alpha=1/p$ such that $\alpha p = 1$.  From the given PARTITION
instance, we construct an irregular gather problem for $p+2$
processors with block sizes $m'_i,0\ldots,p-1,p,p+1$ defined as
follows. First, $m'_i=m_i+\alpha$ for $i=0,\ldots,p-1$. Two
additional, large data blocks $m'_p$ and $m'_{p+1}$ are introduced for
the purpose of hiding the cost of gathering in certain subtrees. The
sizes of these blocks are $m'_p=m/2$ and $m'_{p+1}=m/2+1$. The size of
the constructed gather problem is 
\begin{displaymath}
m'=\sum_{i=0}^{p+1}m'_i=\sum_{i=0}^{p-1}(m_i+\alpha)+m/2+m/2+1=2m+2.
\end{displaymath}

\begin{figure}
\begin{center}
\begin{tikzpicture}[scale=0.75]
\draw[fill=gray!10] (0,0) node [below] {$m'_{p+1}$} rectangle (0.5,4.5);
\node (m) at (0.25,4) [fill,circle,inner sep=1pt] {};

\node (r) at (2,3) [fill,circle,inner sep=1pt] {}; 
\node (i) at (1.5,0) {};
\node (j) at (3,0) {};

\draw[draw=none] (i) -- (j) node [midway,below] {$T_R$};
\draw[fill=gray!10] (j.center) -- (r.center) -- (i.center) -- cycle;

\draw (r) -- (m) node [midway,right,below] {$\alpha$};

\draw[fill=gray!10] (4,0) node [below] {$m'_{p}$} rectangle (4.5,4);
\node (m1) at (4.25,3.5) [fill,circle,inner sep=1pt] {};

\node (r1) at (6,2) [fill,circle,inner sep=1pt] {}; 
\node (i1) at (5.5,0) {};
\node (j1) at (6.5,0) {};

\draw[fill=gray!10] (j1.center) -- (r1.center) -- (i1.center) -- cycle;


\draw (r1) -- (m1) node [midway,right,below] {$\alpha$};

\node (r2) at (8,1) [fill,circle,inner sep=1pt] {}; 
\node (i2) at (7.5,0) {};
\node (j2) at (8.5,0) {};

\draw[fill=gray!10] (j2.center) -- (r2.center) -- (i2.center) -- cycle;

\draw (r2) -- (m1) node [midway,right,above] {$\alpha$};

\node[draw=none,fill=none,below] at (7,0) {$\bar{R}$};

\draw (m) -- (m1) node [midway,right,above] {$\alpha$};
\end{tikzpicture}
\end{center}
\caption{An optimal gather tree corresponding to a solution of a
  PARTITION instance $m_i, i=0,\ldots,p-1$ with
  $m=\sum_{i=0}^{p-1}m_i$. The input for the gather problem is
  $m'_i,i=0,\ldots,p-1,p,p+1$ with $m'_i=m_i+\alpha$ for
  $i=0,\ldots,p-1$, $m'_p=m/2$ and $m'_{p+1}=m/2+1$, with $\alpha=1/p$
  and $\gamma=\beta=1$. The cost of an optimal gather tree is
  $2m+2+2\alpha$ if and only if there a solution to the PARTITION
  instance. The sets $R$ and $\bar{R}$ is the partition of the indices
  $\{0,\ldots,p-1\}$. The proof argues that any other tree will have
  higher cost.}
\label{fig:partition}
\end{figure}

The claim is that there is a solution to the PARTITION instance with
partition into subsets $R$ and $\bar{R}$ if and only if the optimal,
lowest completion time gather tree completes in time
$2m+2+2\alpha$. Furthermore, this optimal time gather tree has the
structure shown in Figure~\ref{fig:partition}, in which the processors
$p+1$ and $p$ having the two large blocks will be the roots in two
subtrees, each gathering blocks from the processors for the two sets
$R$ and $\bar{R}$ and processor $p+1$ finally receiving all blocks
gathered by processor $p$. Recall the optimality criterion of
Proposition~\ref{prop:optimal}, which states that the cost of a gather
tree is determined by the most expensive of two subtrees plus the cost
of communication between two subtrees. Also note that with
$\gamma=\beta=1$, any gather tree will take at least $m'=2m+2$ time to
complete. Finally, note that each non-leaf processor $i,0\leq i<p+2$
in a non-ordered algorithm can do the local copy of its block $m'_i$
at the point where it causes the least overall cost, that is
concurrently with the construction of any one of its subtrees.

For the ``if'' part, first assume that $R,\bar{R}$ is a solution to
the PARTITION instance. Let $R$ be the subset with the smallest number
of elements (denoted by $|R|$) which is at most $p/2$ such that
$p/2+\ceiling{\log_2 |R|}<p$. By Proposition~\ref{prop:linear}, an
optimal gather tree for this subset has cost at most
\begin{eqnarray*}
\ceiling{\log_2 |R|}\alpha+\sum_{i\in R}m'_i & = & 
\ceiling{\log_2 |R|}\alpha+\sum_{i\in R}(m_i+\alpha) \\
& = & \ceiling{\log_2 |R|}\alpha+|R|\alpha+\sum_{i\in R}m_i \\
& = & \ceiling{\log_2 |R|}\alpha+|R|\alpha+m/2
\end{eqnarray*} 
which is smaller than the cost $\gamma m'_{p+1}=m/2+1$ for the local
copy of the large block at processor $p+1$. The subtree over $R$ can
therefore be constructed concurrently with the local copy, and the
blocks from the subtree sent to processor $p+1$, for a total cost of
$m/2+1+\alpha+m/2+|R|\alpha=m+1+\alpha+|R|\alpha$.

The elements in $\bar{R}$ can be gathered to processor $p$ in two
steps in time $m+|\bar{R}|\alpha+2\alpha$ as indicated in
Figure~\ref{fig:partition}. The processors for the elements in
$\bar{R}$ are split into two non-empty parts, such that the gather
time for each is less than $m/2$, the time for processor $p$ to
locally copy its block of size $m'_p=m/2$. The blocks for the two
parts are sent to processor $p$ in two operations incurring two times
the latency $\alpha$ and a total time of $m+|\bar{R}|\alpha+2\alpha$.
Since $m+|\bar{R}|\alpha+2\alpha < m+1+\alpha+|R|\alpha$, the two
trees rooted at processor $p+1$ and processor $p$ can be constructed
concurrently, with total time for completing the gathering at root
processor $p+1$ being
\begin{displaymath}
m+1+\alpha+|R|\alpha+\alpha+m+|\bar{R}|\alpha = 2m+2+2\alpha
\end{displaymath}
as claimed, recalling that $|R|\alpha+|\bar{R}|\alpha=\alpha p=1$.

For the ``only if'' part, we argue that any other tree takes longer
than $2m+2+2\alpha$ to complete gathering. Therefore, if there is no
solution to the PARTITION instance, the optimal gather tree has a
different structure than shown in Figure~\ref{fig:partition} and takes
longer. Since each send operation to the root adds at least an
$\alpha$ term, we do not have to consider trees where the root has
more than two children. Also, the processors $p$ and $p+1$ having the
large blocks must be subtree roots, since sending a large block, say,
$m'_{p+1}=m/2+1$ would incur at least
$\alpha+m'_{p+1}$ extra time instead of only $m'_{p+1}$ for a local copy when
processor $p+1$ is a local root. We therefore only have to consider
trees with the structure shown in Figure~\ref{fig:partition} and
Figure~\ref{fig:nopartition}. The case where the root processor $p+1$
receives from only one child would take much longer, resulting
from first gathering all elements from $R\cup\bar{R}$ at processor $p$
which takes time at least $m'_p+m+1+\alpha$ and then sending the
$m/2+m+1$ units to processor $p+1$. This is in total at least
$3m+2+2\alpha$.

We first consider the case where the block sizes of the two subsets
$R$ and $\bar{R}$ are not balanced, and argue that the tree shown in
Figure~\ref{fig:partition} has completion time larger than
$2m+2+2\alpha$.

Assume that $\sum_{i\in R}m_i>m/2$. Since trivial solutions to the
partition problem are excluded, $R$ consists of at least two
processors, and the gather time of a tree rooted at some processor in $R$
is therefore at least $\alpha+\sum_{i\in R}m'_i$.  Since each $m_i\geq
1$ it follows that $\sum_{i\in R}m_i\geq m/2+1=m'_{p+1}$, therefore
the local copy at processor $p+1$ is best done concurrently with
gathering in $R$ in order to balance the two terms in
Proposition~\ref{prop:optimal}.  The time for processor $p+1$ to gather
from the root of $R$ is therefore at least $2\alpha+\sum_{i\in
  R}m'_i$. Now gathering from the root of $\bar{R}$ adds at least
$\alpha+m'_p+\sum_{i\in\bar{R}}m'_i$ to the total time, which is thus
at least $3\alpha+2m+2$.

If on the other hand $\sum_{i\in R}m_i<m/2$, and consequently
$\sum_{i\in\bar{R}}m_i\geq m/2+1=m'_{p+1}$, the subtree rooted at
processor $p$ would have completion time larger than
$m'_{p+1}+\alpha+\sum_{i\in R}m'_i$ for a total time after gathering
at processor $p+1$ of at least $2m'_{p+1}+2\alpha+\sum_{i\in
  R}m'_i+\sum_{i\in \bar{R}}m'_i = 2m+3+2\alpha$.

\begin{figure}
\begin{center}
\begin{tikzpicture}[scale=0.75]
\draw (0,0) node [below] {$m'_{p+1}$} rectangle (0.5,4.5) [fill=red!10];
\node (m) at (0.25,4) [fill,circle,inner sep=1pt] {};

\draw (1.5,0) node [below] {$m'_{p}$} rectangle (2,4) [fill=red!10];
\node (m1) at (1.75,3.5) [fill,circle,inner sep=1pt] {};

\node (r) at (4.5,3) [fill,circle,inner sep=1pt] {}; 
\node (i) at (3,0) {};
\node (j) at (6.5,0) {};

\draw[draw=none] (i) -- (j) node [midway,below] {$T_{R\cup\bar{R}}$};
\draw [fill=red!10] (j.center) -- (r.center) -- (i.center) -- cycle;

\draw (r) -- (m) node [midway,right,above] {$\alpha$};

\draw (m) -- (m1) node [midway,right,below] {$\alpha$};
\end{tikzpicture}
\end{center}
\caption{Extreme case gather tree for a for a PARTITION instance with
  cost larger than $2m+2+2\alpha$. By the choice of $\alpha$, the cost
  of the tree $T_{R\cup \bar{R}}$ is at least $m+1+2\alpha$ and larger
  that $m+1+\alpha$ which is the cost of $m'_{p+1}$.}
\label{fig:nopartition}
\end{figure}

Also the structure shown in Figure~\ref{fig:nopartition} has larger
completion time than $2m+2+2\alpha$. Since $R\cup\bar{R}$ by the
non-triviality assumption has at least three elements, the cost of
gathering in a tree over $R\cup\bar{R}$ is at least $m+1+2\alpha$ as
shown in Lemma~\ref{lem:twochildren} stated below.  Therefore an
optimal tree rooted at $p+1$ which first receives block $m_p$ and
subsequently all elements of the PARTITION instance would take time at
least $2m+2+3\alpha$.

In summary, if there is no solution to the given PARTITION instance,
the completion time of an optimal gather tree is strictly larger than
$2m+2+2\alpha$, as claimed.
\end{proof}

For the proof, the following structural lemma was needed.
\begin{lemma}
\label{lem:twochildren}
Let $m_i,0\leq i<p$ be the block sizes of a gather problem of size
$m=\sum_{i=0}^{p-1}m_i$ over at least three processors $p$ and with
each $m_i$ a positive integer with $m_i<\floor{m/2}$. An optimal
gather tree has at least two children and cost at least $2\alpha+m$ in
a system with communication parameters $\gamma=\beta=1$, and $0<\alpha<1$ with
$\ceiling{\log_2 p}\alpha<1$.
\end{lemma}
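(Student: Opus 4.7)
The plan is to case-split on the number $c$ of children at the root, combining a simple root-sequencing bound with the recursive cost formula of Proposition~\ref{prop:optimal}. First I establish an auxiliary sub-claim: any gather tree over $q \geq 2$ processors with total data $s$ and with $c \geq 1$ children at the root has cost at least $s + c\alpha$. This follows because the root is one-ported and must perform its own local copy (cost $\gamma m_r = m_r$) plus $c$ receives totalling $s - m_r$ units of data with $c$ start-up latencies, all sequentially, so by Equations~(\ref{eq:commcost}) and~(\ref{eq:localcopy}) the tree completion time is at least $m_r + c\alpha + (s - m_r) = s + c\alpha$.

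Now let $T^r$ be any gather tree for the given instance with $c$ children at $r$. If $c \geq 2$, the sub-claim immediately gives $\cost(T^r) \geq m + 2\alpha$. If $c = 1$, let $r'$ be the only child and $T^{r'}$ the subtree of size $m - m_r$ over $p - 1 \geq 2$ processors; applying the sub-claim to $T^{r'}$ yields $\cost(T^{r'}) \geq (m - m_r) + \alpha$. Equation~(\ref{eq:local}) then gives
\begin{displaymath}
\cost(T^r) = \max(\gamma m_r, \cost(T^{r'})) + \alpha + (m - m_r).
\end{displaymath}
Since $m_r < \lfloor m/2 \rfloor$ forces $m - m_r > m/2 > m_r$, we have $\cost(T^{r'}) \geq (m - m_r) + \alpha > m_r$, so the maximum equals $\cost(T^{r'})$, whence $\cost(T^r) \geq 2(m - m_r) + 2\alpha > m + 2\alpha$, strictly.

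For the assertion that an optimum has at least two children, I would invoke Proposition~\ref{prop:linear} to produce an ordered tree of cost at most $\ceiling{\log_2 p}\alpha + m$. Because the $m_i$ are positive integers with $m_i < \lfloor m/2 \rfloor$, a short calculation gives $m - 2m_r \geq 2$, while the hypothesis $\ceiling{\log_2 p}\alpha < 1$ gives $(\ceiling{\log_2 p} - 2)\alpha < 1 \leq m - 2m_r$; thus Proposition~\ref{prop:linear}'s tree has cost strictly below $2(m - m_r) + 2\alpha$ and so beats every single-child-root tree, ruling out $c = 1$ at any optimum. The main obstacle is the $c = 1$ case: one must use Equation~(\ref{eq:local}) to correctly overlap the root's local copy with construction of the subtree (rather than paying it separately), and then lean on the strict inequality $m_r < \lfloor m/2 \rfloor$ both to identify the dominating term in the max and to upgrade the bound from $2\alpha + m$ to the strict $2(m - m_r) + 2\alpha$ needed to exclude single-child-root trees under the assumed latency constraint.
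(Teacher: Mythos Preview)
Your proof is correct and follows essentially the same approach as the paper's: case-split on the number of children at the root, show that a single-child root is too expensive, and conclude the $m+2\alpha$ lower bound from the root's sequential work. Your version is more explicit than the paper's---you state the root-sequencing sub-claim cleanly and actually invoke Proposition~\ref{prop:linear} to exhibit a tree beating every single-child configuration, whereas the paper's brief argument leaves that comparison implicit---but the underlying idea is the same.
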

\begin{proof}
Assume that some processor $r\in\{0,\ldots,p-1\}$ is chosen as root in
an optimal tree. Since $m_r<m/2$, and $\sum_{i=0,\ldots,r-1,r+1,\ldots
  p-1}m_i\geq m/2+1>m/2$, having the root receive all elements from
just one child would cost at least $m+2+\alpha$. An optimal tree will
therefore have at least two children, and cost at least $m+2\alpha$.
\end{proof}

In the proof of Theorem~\ref{thm:hardness}, we chose $\alpha<1$ with
$p\alpha=1$. The proof can easily be adopted for the case where only
integer costs are allowed for the model parameters
$\alpha,\beta,\gamma$. Simply choose $\alpha=1$, and construct the
gather problem with $m'_i=pm_i+1$, $m'_p=pm/2$ and
$m'_{p+1}=pm/2+p$. The claim is that an optimal gather tree has cost
$2pm+2p+2$ if an only if there is a solution to the PARTITION instance
given by the $m_i$ elements.

As an example, consider the PARTITION instance
$m_i=[2,3,3,3,4,4,4,6,6,13]$ with $p=10$ and $m=48$. This instance has
a solution with $R$ being the elements $[3,3,4,4,4,6,6]$, and an
optimal gather tree as constructed in the proof of
Theorem~\ref{thm:hardness} would complete in time $982=2\times 10\times
48+2\times 10+2$. However, considered as an ordered problem
$[2,3,3,3,4,4,4,6,6,13]$ with two extra elements $m_p=pm/2=240$ and
$m'_{p+1}=pm/2+p=250$ leads to a solution of cost $983$ corresponding
to one extra subtree. Thus, relaxing the ordering constraint can lead
to slightly better solutions, but finding the better tree is an
NP-hard problem as now proved in Theorem~\ref{thm:hardness}.

\section{Relative Quality of Gather Trees}
\label{sec:quality}

\begin{table*}[ht]
\caption{Gather tree completion times for low latency network,
  $p=2000,b=1000,r=\floor{p/2},\rho=5,\alpha=100,\beta=1,\gamma=1,0$. For
  the different tree constructions, both the cases where the root $r$
  is given externally (first row), and the case where a best root is
  chosen by the algorithm (second row) are shown. For the latter, the
  chosen root $(r)$ is given in brackets.}
\label{tab:res-lowlatency1000}
\begin{center}
\begin{tiny}
  \begin{tabular}{lrrrrrrrrrrr}
  \hline
$\gamma=1$ & $m$ & Linear & ($r$) & Binary & ($r$) & Oblivious & ($r$) & Adaptive & ($r$) & Optimal & ($r$) \\
 Same & 2000000 & 2199900 & & 3611600 & & 2001100 & & 2001100 & & 2001100 & \\
      & 2000000 & 2199900 & (0) & 3227500 & (0) & 2001100 & (0) & 2001100 & (1023) & 2001100 & (1998) \\
 Random & 1983668 & 2183568 & & 3587630 & & 2031225 & & 1994851 & & 1984868 & \\
      & 1983668 & 2183568 & (0) & 3198069 & (753) & 2021142 & (511) & 1984768 & (331) & 1984768 & (891) \\
 Random & 1983668 & 2183568 & & 4376141 & & 2339507 & & 2249565 & & 1984868 & \\
 decreasing     & 1983668 & 2183568 & (0) & 3193753 & (0) & 1984768 & (0) & 1984768 & (1) & 1984668 & (0) \\
 Random & 1983668 & 2183568 & & 3882776 & & 3023218 & & 2930348 & & 1984868 & \\
 increasing     & 1983668 & 2183568 & (0) & 3201309 & (1236) & 2077638 & (1999) & 1984768 & (1791) & 1984668 & (1998) \\
 Bucket & 2005668 & 2205568 & & 3613831 & & 2033113 & & 2013851 & & 2006868 & \\
      & 2005668 & 2205568 & (0) & 3234535 & (0) & 2033113 & (0) & 2006768 & (215) & 2006768 & (1135) \\
 Spikes & 2001600 & 2201500 & & 3569427 & & 2142672 & & 2037693 & & 2002900 & \\
      & 2001600 & 2201500 & (0) & 3219118 & (775) & 2132722 & (1999) & 2002700 & (1549) & 2002600 & (1198) \\
  Decreasing & 2003000 & 2202900 & & 4416588 & & 2353624 & & 2266244 & & 2004200 & \\
      & 2003000 & 2202900 & (0) & 3224561 & (0) & 2004100 & (0) & 2004100 & (1) & 2004000 & (0) \\
  Increasing & 2003000 & 2202900 & & 3917085 & & 3048580 & & 2955452 & & 2004200 & \\
      & 2003000 & 2202900 & (0) & 3233477 & (1999) & 2097228 & (1999) & 2004100 & (1791) & 2004000 & (1998) \\
  Alternating & 2000000 & 2199900 & & 3610600 & & 2001100 & & 2001100 & & 2001100 & \\
      & 2000000 & 2199900 & (0) & 3227500 & (0) & 2001100 & (0) & 2001100 & (1023) & 2001100 & (1998) \\
Skewed & 2001995 & 2201895 & & 4403490 & & 17203057 & & 4003090 & & 2003495 & \\
      & 2001995 & 2201895 & (0) & 2402295 & (0) & 2003095 & (0) & 2003095 & (3) & 2002295 & (2) \\
  Two blocks & 2000000 & 2000200 & & 3000200 & & 11001100 & & 3000200 & & 2000200 & \\
      & 2000000 & 2000100 & (0) & 2000100 & (0) & 2000100 & (0) & 2000100 & (1999) & 2000100 & (0) \\
\hline
$\gamma=0$ & $m$ & Linear & ($r$) & Binary & ($r$) & Oblivious & ($r$) & Adaptive & ($r$) & Optimal & ($r$) \\
 Same & 2000000 & 2198900 & & 3609600 & & 2000100 & & 2000100 & & 2000100 & \\
      & 2000000 & 2198900 & (0) & 3226500 & (0) & 2000100 & (0) & 2000100 & (1023) & 2000100 & (1998) \\
 Random & 1983668 & 2182874 & & 3579560 & & 2029413 & & 1992911 & & 1984074 & \\
      & 1983668 & 2181898 & (1996) & 3191893 & (0) & 2019330 & (511) & 1982828 & (330) & 1982768 & (891) \\
 Random & 1983668 & 2182589 & & 4373405 & & 2337510 & & 2247565 & & 1983889 & \\
 decreasing & 1983668 & 2181568 & (0) & 3192013 & (0) & 1982768 & (0) & 1982768 & (0) & 1982668 & (0) \\
 Random & 1983668 & 2182589 & & 3878419 & & 3021235 & & 2928563 & & 1983889 & \\
 increasing & 1983668 & 2181568 & (1999) & 3195849 & (1999) & 2075655 & (1999) & 1982983 & (1791) & 1982671 & (1998) \\
 Bucket & 2005668 & 2204375 & & 3608690 & & 2031871 & & 2012355 & & 2005675 & \\
      & 2005668 & 2204235 & (1999) & 3231178 & (1999) & 2031871 & (0) & 2005272 & (215) & 2005269 & (891) \\
 Spikes & 2001600 & 2201499 & & 3544637 & & 2137672 & & 2032693 & & 2002699 & \\
      & 2001600 & 2196500 & (1992) & 3194207 & (775) & 2127722 & (1999) & 1997700 & (1549) & 1997600 & (1198) \\
  Decreasing & 2003000 & 2201899 & & 4413708 & & 2351624 & & 2264243 & & 2003199 & \\
      & 2003000 & 2200899 & (0) & 3223075 & (0) & 2002099 & (0) & 2002099 & (0) & 2001999 & (0) \\
  Increasing & 2003000 & 2201898 & & 3912715 & & 3046595 & & 2953659 & & 2003198 & \\
      & 2003000 & 2200899 & (1999) & 3227925 & (1999) & 2095243 & (1999) & 2002307 & (1791) & 2002000 & (1998) \\
  Alternating & 2000000 & 2198400 & & 3604600 & & 2000600 & & 1999600 & & 1999600 & \\
      & 2000000 & 2198400 & (0) & 3223000 & (0) & 2000600 & (0) & 1999600 & (1022) & 1999600 & (1998) \\
Skewed & 2001995 & 2201894 & & 4003489 & & 16803057 & & 3603090 & & 2003294 & \\
      & 2001995 & 1801895 & (0) & 2002295 & (0) & 1603095 & (0) & 1603095 & (3) & 1602295 & (2) \\
  Two blocks & 2000000 & 2000200 & & 2000200 & & 11001100 & & 2000200 & & 2000200 & \\
      & 2000000 & 1000100 & (0) & 1000100 & (0) & 1000100 & (0) & 1000100 & (1999) & 1000100 & (0) \\
\hline
\end{tabular}
\end{tiny}
\end{center}
\end{table*}

Since the costs of computing optimal, ordered gather and scatter trees
are high, it makes sense to compare the completion times of optimal
trees to the completion times of other types of trees that can be
less expensively constructed. Such a comparison can also throw light
on the performance (problems) with simple constructions often used in
communication interfaces like, \eg, MPI. For this comparison, we now
focus on the irregular gather problems. Scatter trees will have the same
completion times.

\subsection{Model comparisons}

We have implemented the dynamic programming algorithms for
constructing ordered, varying degree trees as well as ordered, binary
trees for homogeneous communication networks from
Theorems~\ref{thm:optimalhomo} and~\ref{thm:binaryordered}.  In
addition, we build simple, linear-latency, star-shaped communication
trees, in which processors send directly to the root processor one
after the other, as well as standard, rank-ordered binomial
trees. These latter trees (linear and binomial) are
\emph{problem-oblivious} in the sense that the structure of the
communication tree is determined solely by $p$, the number of
processors, and by not the problem block sizes $m_i$. In addition, we
have implemented the algorithm for constructing problem (size and
distribution) aware binomial trees from
Proposition~\ref{prop:linear}. We call these trees
\emph{(problem-)adaptive}.  The algorithm constructs a binomially
structured tree that avoids waiting times by always letting the root
of the faster tree (smaller size) send its data to the root of an
adjacent tree with larger completion time. We are interested in seeing
how the adaptive algorithms fare, in particular, how far the
problem-adaptive binomial tree construction is from the optimal,
ordered algorithms. If not far, there is no reason to spend $O(p^3)$
operations (offline) in precomputing an optimal, ordered tree for a
given gather problem; if far, it would make sense to look for good,
parallel algorithms for computing the optimal gather and scatter
trees. The constructions are done offline, and we calculate the model
costs for homogeneous networks with chosen $\alpha,\beta,\gamma$
parameters.  All the implemented algorithms also explicitly construct
the ordered communication trees with the corresponding completion
times as explained\footnote{All implementations used compute
  completion times and the corresponding trees available at
  \url{par.tuwien.ac.at/Downloads/TUWMPI/tuwoptimalgathertrees.c}}.

The completion times computed and shown below are model costs, and not
results from actual executions on any real, parallel computing
system. Later, we do compare actual gather and scatter times for 
specially structured problems using trees with optimal structure
against trees constructed by the problem-aware, adaptive, binomial
tree algorithm~\cite{Traff18:irreggatscat} on a real cluster system.

Concretely, we study the trees constructed by the following
algorithms.
\begin{itemize}
\item
Linear: Linear, star-tree gather algorithm; this construction takes
$O(p)$ steps to compute for an externally given, fixed root $r$.
\item
Binary: Optimal binary tree computed by dynamic programming as
stated in Theorem~\ref{thm:binaryordered}. 
\item
Oblivious: Standard, problem-oblivious, rank-ordered binomial tree.
This can be computed in $O(p)$ steps for an externally given, fixed
root $r$, see, \eg,~\cite{ChanHeimlichPurkayasthavandeGeijn07}.
\item
Adaptive: Problem-aware, adaptive binomial tree constructed by the
algorithm of~\cite{Traff18:irreggatscat} with the properties described
in Proposition~\ref{prop:linear}. Since the constructed tree is
binomial, the root has $\ceiling{\log_2 p}$ children, but otherwise
there are no waiting times and the gather completion time is
$\ceiling{\log_2 p}\alpha+\beta\sum_{i\neq r}m_i+\gamma m_r$ when the
root $r$ is chosen by the algorithm. Also this construction takes
$O(p)$ steps for an externally given, fixed root $r$.
\item
Optimal: Optimal, variable degree gather tree computed by the dynamic
programming algorithm as stated in Theorem~\ref{thm:optimalhomo} in
$O(p^3)$ steps.
\end{itemize}
All algorithms have two variants depending on whether a fixed gather
root $r$ is externally given (as in the \texttt{MPI\_Gatherv}
operation), or a best possible root chosen by the algorithm. We run
both variants in our experiments; it is interesting to see how
negatively a fixed root affects the gather times.

We have experimented with the same distributions of data block sizes
$m_i$ to the processors as in~\cite{Traff18:irreggatscat}. Let $b,
b>0$ be a chosen, average block size, and $\rho>0$ a further
parameter. The problems to be solved have the following distributions
of data block sizes.
\begin{itemize}
\item
\textbf{Same}: For processor $i$, $m_i=b$.
\item
\textbf{Random}: Each $m_i$ is chosen uniformly at random in the range
$[1,2b]$.
\item
\textbf{Random, decreasing}: As random, but the $m_i$ are sorted decreasingly.
\item
\textbf{Random, increasing}: As random, but the $m_i$ are sorted increasingly.
\item
\textbf{Random bucket}: Each $m_i$ has a fixed contribution of
$\ceiling{b/2}$ plus a variable contribution chosen randomly in the
range $[1,b]$.
\item
\textbf{Spikes}: Each $m_i$ is either $\rho b$ or $1$, chosen
  randomly with probability $1/\rho$ for each processor $i$.
\item
\textbf{Decreasing}: For processor $i$, $m_i=\floor{\frac{2b(p-i)}{p}}+1$
\item
\textbf{Increasing}: For processor $i$, $m_i=\floor{\frac{2b(i+1)}{p}}+1$
\item
\textbf{Alternating}: For even numbered processors, $m_i=b+\floor{b/2}$,
  for odd numbered processors $m_i=b-\floor{b/2}$.
\item
\textbf{Skewed}: For the first $\rho$ processors, $m_i=pb/\rho,
0\leq i<\rho$, for the remainder processors $m_i=1$.
\item
\textbf{Two blocks}: All $m_i=0$, except $m_0=\floor{pb/2}$ and
$m_{p-1}=\floor{pb/2}$.
\end{itemize}

The problems are so defined that the total problem size
$m=\sum_{i=0}^{p-1}m_i$ is roughly the same, $m\approx pb$ for all
problem types.  The \textbf{Same} sized problem can serve as a sanity
check, since the optimal completion times are known analytically for
the regular gather operations (binomial trees). For the other problem
distributions we look at variants where the data block sizes are not
sorted and variants where the data blocks are in either increasing or
decreasing order over the processors. This tests the heuristic
proposed in~\cite{Traff04:gatscat} which suggests to virtually rerank
processors such that blocks are in decreasing order and construct
trees over the virtual ranks. Such trees are non-ordered in the sense
of Definition~\ref{def:ordering}. If there are differences between
increasingly and decreasingly sorted problems, this gives concrete
problem instances where non-ordered trees have lower completion times
than ordered ones.

We have computed the completion times of the trees in the
linear transmission cost model with $p=2000$ processors and different
values for $\alpha, \beta$ and $\gamma$, that is for fully connected
systems with homogeneous transmission costs. In all cases, the
externally given root $r$ is chosen as the processor in the middle,
namely $r=\floor{p/2}$ (as was also done
in~\cite{Traff18:irreggatscat}). The parameter for the \textbf{Spikes}
and \textbf{Skewed} distributions has been taken as $\rho=5$.

In the experiments, we have fixed the communication time per unit to
$\beta=1$, and look at results for a low-latency regime with
$\alpha=100$, which means that (only) $100$ units need to be
transferred in order to outweigh the cost of one extra communication
operation. We give results for medium large problems with $b=1000$. We
report the case with local copy costs with $\gamma=\beta$ and the case
with no local copy costs with $\gamma=0$.  The results are shown in
Table~\ref{tab:res-lowlatency1000}; results for ultra low and high
latency networks with $\alpha=1$ and $\alpha=1000$ are shown in
Table~\ref{tab:res-ultralatency1000} and~\ref{tab:res-highlatency1000}
in the appendix.  The first line for each data block distribution is
the completion times for the trees with externally given root
$r=\floor{p/2}$, while the second line gives the best possible time
with a root chosen by the algorithm (given in brackets).
Proposition~\ref{prop:linear} gives another sanity check for the
adaptive binomial tree construction, where the completion times should
be exactly $m+\ceiling{\log_2 p}\alpha=m+1100$.

For all algorithms there is a sometimes quite significant difference
between the case where the algorithm is allowed to choose a best
possible root (leading to lowest completion time) and the case where
the root is externally imposed. The optimal algorithm is best able to
compensate for this by finding trees that better accommodate an
ill-chosen root, and the differences between the two cases for this
algorithm seem small. Since this is the common case for interfaces
like MPI, optimal, ordered, variable-degree trees can have some
advantage over the other constructions.  An externally determined root
is particularly damaging to the oblivious algorithms like the
standard, binomial tree, where this can lead to large data blocks
having to be sent along paths of logarithmic length.  The second best
algorithm is the adaptive binomial tree which can in many cases also
accommodate, and is almost always strictly better than the oblivious
binomial tree.  In the presence of latency, the linear algorithm is
not optimal since it pays at least $(p-1)\alpha$ time units of latency
which becomes more and more prominent with increasing network
latency. Binary trees are in (almost) all cases significantly worse
than the other trees (including linear, star-shaped), and should be
disregarded for gather and scatter collectives.

For the \textbf{Same} sized problems, binomial trees are known to be
optimal, and as can be seen the dynamic programming algorithms
produces trees with the same completion times. For most of the other
distributions, except \textbf{Alternating} and \textbf{Two blocks},
optimal, ordered trees can actually save a few latency terms over the
adaptive, binomial trees, which could be significant in high latency
systems.  Differences between the oblivious and the adaptive binomial
trees can be very considerable when the root $r$ is imposed from the
outside, exhibited prominently for the \textbf{Skewed} and the
\textbf{Two blocks} distributions. The differences mostly disappear if
the algorithms themselves are allowed to choose a best possible
root. Whether block sizes occur in decreasing or increasing order also
makes a considerable difference for the binomial tree algorithms when
the root is fixed as can be seen for the \textbf{Random},
\textbf{Random increasing} and \textbf{Random decreasing}, and for the
\textbf{Increasing} and \textbf{Decreasing} distributions.  This
vindicates the heuristic suggested in~\cite{Traff04:gatscat} according
to which processors should be ordered such that block sizes are in
decreasing order from the root processor. When the root is instead chosen by
the algorithm, the differences disappear for the optimal, ordered
trees and the adaptive trees, but not for the oblivious binomial
trees. It is also noteworthy that sorting, whether in increasing or
decreasing order make the optimal ordered trees better than when the
blocks are not sorted, even if it is only by one $\alpha$.

Similar behavior is seen when there is no local copy cost, $\gamma=0$,
but the costs of the trees are lower by at most the size of one $m_i$
(smallest) block which for the \textbf{Skewed} distributions can be
quite significant. The result for the \textbf{Bucket} distribution
shows that the trees with and without local copy cost do not
necessarily have the same structure. A different, best root is chosen
by the optimal algorithms for the two cases. It is noteworthy, 
that for the decreasing and increasing distributions, the optimal
trees have different cost, even for the case where the algorithm
chooses the root. Again, this vindicates the heuristic that builds
trees over processes in decreasing block order~\cite{Traff04:gatscat},
and shows that dropping the ordering constraint can lead to less
costly trees.

\subsection{Practical impact}

\begin{figure*}[ht]
\includegraphics[width=0.45\textwidth]{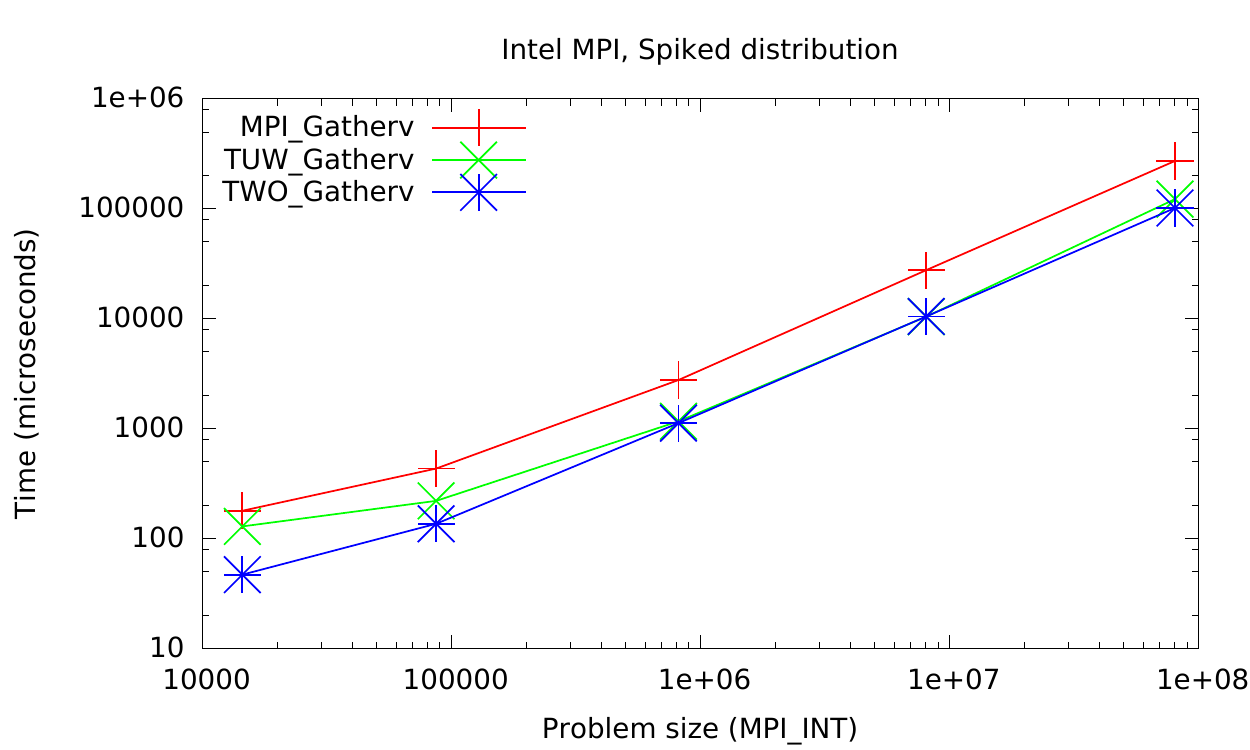}
\includegraphics[width=0.45\textwidth]{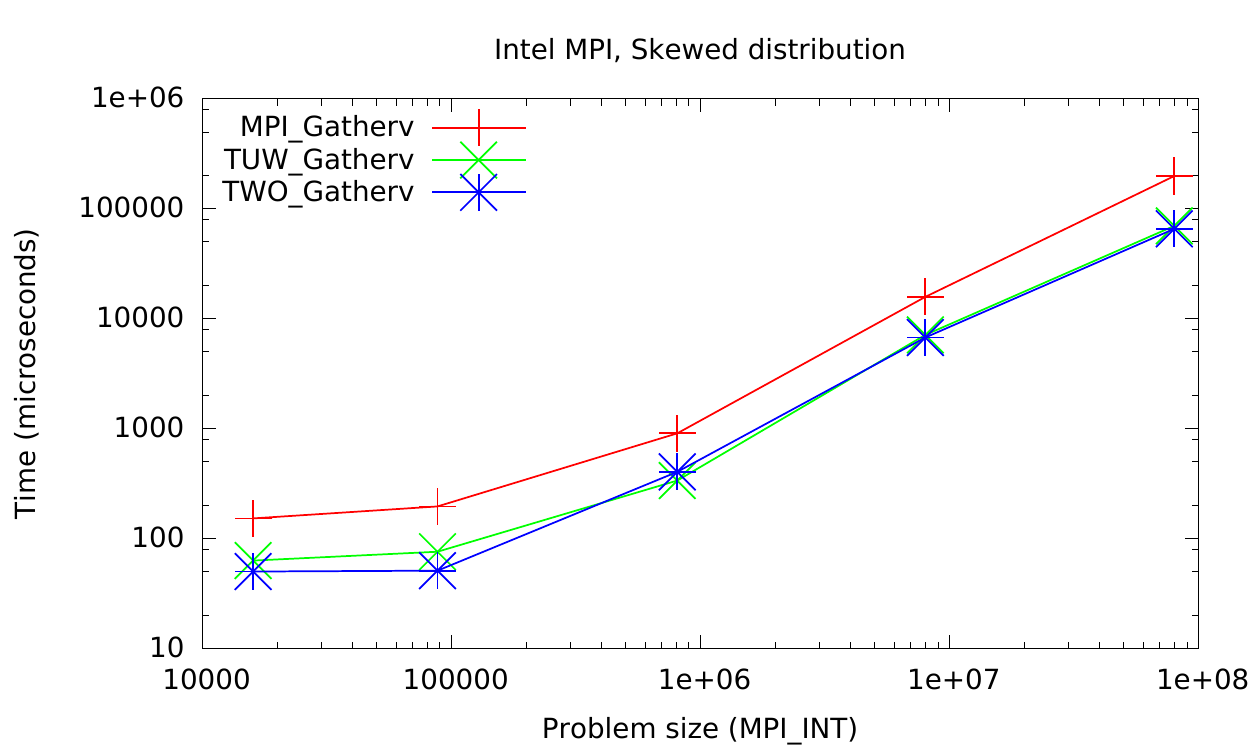}
\caption{Running times of three different irregular gather
  implementations \mpigatherv, \tuwgatherv, and \twogatherv with
  $p=500\times 16=8000$ processes on an Intel/InfiniBand cluster with
  the Intel MPI library.  Minimum completion times over 75 repetitions
  are in microseconds ($\mu s$), problem sizes in number of \mpiint
  elements, doubly logarithmic plot.}
\label{fig:VSCgather}
\end{figure*}

\begin{figure*}[ht]
\includegraphics[width=0.45\textwidth]{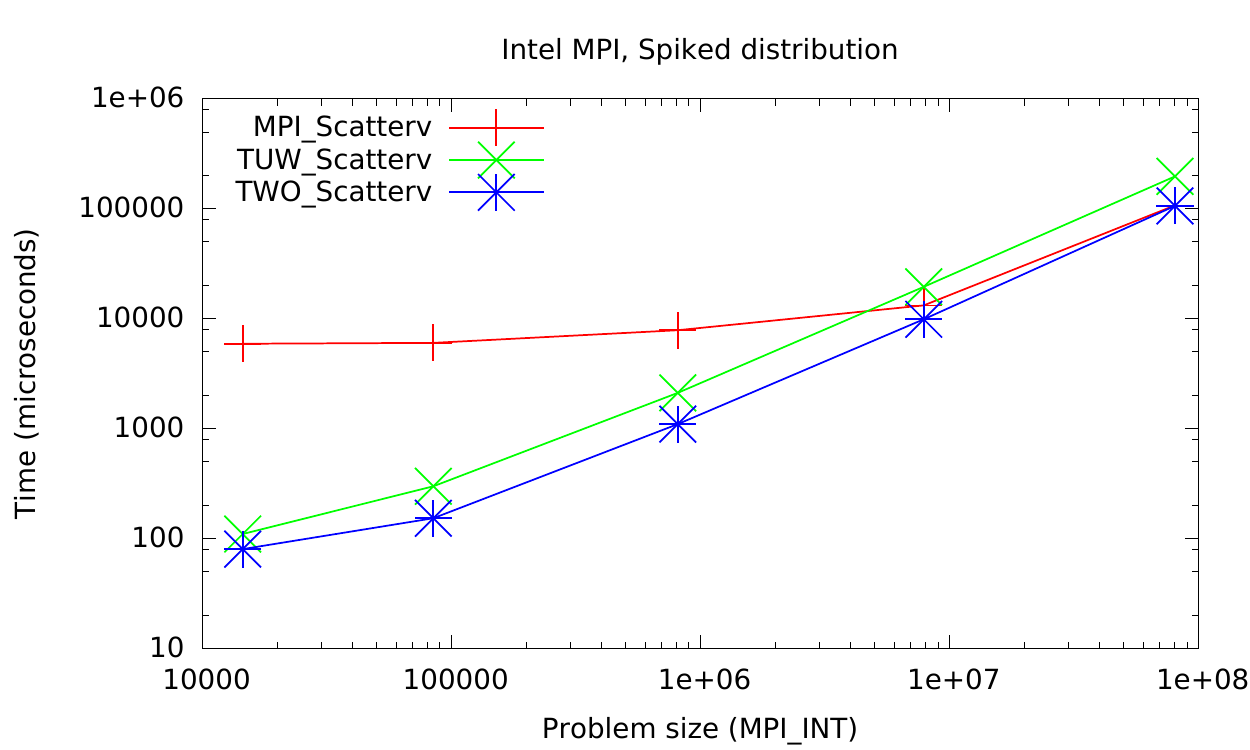}
\includegraphics[width=0.45\textwidth]{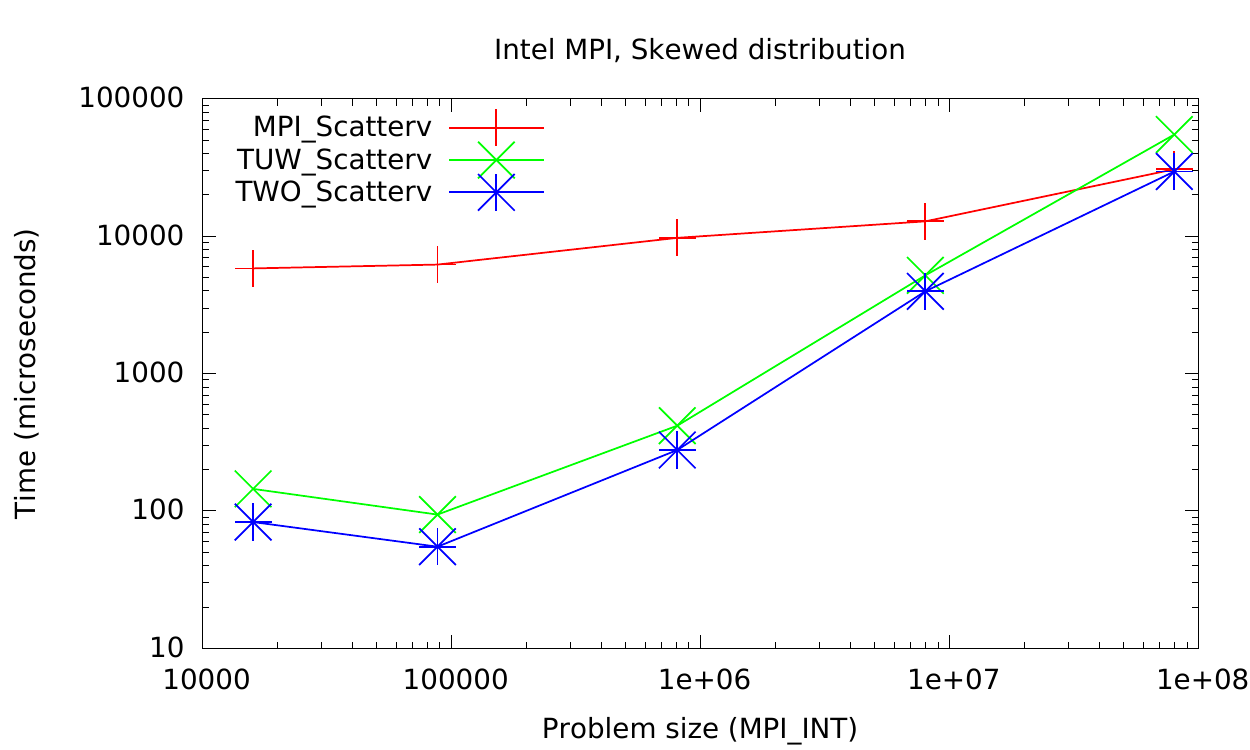}
\caption{Running times of three different irregular scatter
  implementations \mpiscatterv, \tuwscatterv, and \twoscatterv with
  $p=500\times 16=8000$ processes on an Intel/InfiniBand cluster with
  the Intel MPI library.  Minimum completion times over 75 repetitions
  are in microseconds ($\mu s$), problem sizes in number of \mpiint
  elements, doubly logarithmic plot.}
\label{fig:VSCscatter}
\end{figure*}

We finally investigate whether there are cases where non-binomial,
optimal trees can do better than the adaptive binomial trees
described and evaluated in~\cite{Traff18:irreggatscat} when
implemented for a real system for and in MPI. We look at specific
gather/scatter problems where it is obvious that the smallest
completion time trees are not binomial. This is captured in
Lemma~\ref{lem:largesmall}.

\begin{lemma}
\label{lem:largesmall}
Let $x$ and $y$ be two distinct block sizes, $x<y$, for a gather or
scatter problem with block sizes $m_i, 0\leq i<p$ with $m_r=y$ for
exactly one $r$ and $m_i=x$ for all other $i\neq r$, such that the
size of the problem is $m=(p-1)x+y$. Let $\alpha,\beta,\gamma$ be the
communication parameters of a homogeneous system with $\gamma=\beta$.
If $\ceiling{\log (p-1)}\alpha+\beta (p-1)x \leq \beta y$, the cost
of an optimal gather or scatter tree is $\alpha+\beta m$, and smaller
than $\ceiling{\log p}\alpha+\beta m$.
\end{lemma}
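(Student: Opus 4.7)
My plan is to exhibit an explicit gather tree achieving the claimed cost $\alpha + \beta m$ and match this with the simple one-ported lower bound from Section~\ref{sec:lowerbounds}. The intuitive picture: make the exceptional processor $r$ (holding the large block $y$) the root, with a single child $r'$ that first coordinates an internal gather over all $p-1$ remaining processors (each holding a block of size $x$) before shipping the aggregated segment to $r$. The condition $\ceiling{\log(p-1)}\alpha + \beta(p-1)x \leq \beta y$ is precisely what is needed to hide the entire internal gather inside the local copy cost $\gamma m_r = \beta y$ at the root, so that the root degree can safely be $1$ rather than $\ceiling{\log p}$.

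For the construction, I would apply Proposition~\ref{prop:linear} to the $p-1$ equally sized blocks of size $x$ held by the non-root processors. This yields a strongly ordered gather tree, rooted at some processor $r'$, with completion time exactly $\ceiling{\log(p-1)}\alpha + \beta(p-1)x$, built in linear offline time. Now attach this subtree as the sole non-singleton child of $r$, with the singleton $T^r_{\{r\}}$ (representing the local copy) preceding it in the sequence of subtrees at $r$. By Equation~(\ref{eq:local}) of Proposition~\ref{prop:optimal}, the cost of the resulting tree is $\max(\gamma m_r, \cost(T^{r'}_{\bar{R}})) + \alpha + \beta(p-1)x$. By the hypothesis the maximum evaluates to $\beta y$, so the total is $\beta y + \alpha + \beta(p-1)x = \alpha + \beta m$, as claimed.

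For optimality I would invoke the one-ported lower bound $\alpha + \beta\sum_{i\neq r^*}m_i + \gamma m_{r^*}$ from Section~\ref{sec:lowerbounds}, valid for any choice of root $r^*$. Because $\gamma = \beta$, this expression telescopes to $\alpha + \beta m$ independently of $r^*$, so no tree can do better, and the construction is optimal. The strict comparison with the adaptive bound is immediate: for $p \geq 3$ we have $\ceiling{\log p} \geq 2$, hence $\alpha + \beta m < \ceiling{\log p}\alpha + \beta m$. I do not anticipate any serious obstacle; the only point to get right is the concurrency accounting via Equation~(\ref{eq:local}), namely that placing the singleton $T^r_{\{r\}}$ first in the subtree sequence of $r$ really does allow the local copy to overlap with completion of the internal gather at $r'$ — which is exactly what the hypothesis guarantees.
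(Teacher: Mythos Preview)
Your proposal is correct and follows essentially the same argument as the paper: build the subtree over the $p-1$ small blocks via Proposition~\ref{prop:linear}, overlap its completion with the local copy $\gamma m_r=\beta y$ at the large-block processor $r$, and then pay a single $\alpha+\beta(p-1)x$ for the final transmission. Your optimality step via the one-ported lower bound of Section~\ref{sec:lowerbounds} is in fact slightly more explicit than the paper's informal remark that ``no better, lower latency tree can exist since also $p>1$,'' but the content is the same.
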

\begin{proof}
By Proposition~\ref{prop:linear}, a tree of cost at most $\ceiling{\log
  (p-1)}\alpha+\beta (p-1)x$ exists over the processors with
$m_i=x$. Concurrently, the processor with $m_r=y$ can perform its local
copy which takes $\beta y$, after which transmission between the roots
of the two trees take place. Since there is only one communication
latency to be accounted for in this tree, no better, lower latency
tree can exist since also $p>1$. Any tree where the block $m_r$ is
transmitted will have cost $\alpha+\beta m_r$ plus the cost of the
other blocks, and thus larger.
\end{proof}

The argument assumes that processors can be freely ordered such that
the $x$ and $y$ trees can be handled concurrently. If ordered gather
or scatter trees are required, this is only possible if either $m_0=y$
or $m_{p-1}=y$, otherwise at least two communication operations will
be required. This again shows that ordered trees can be slower than
non-ordered trees.  The dynamic programming algorithms would produce
the best possible, ordered trees for such problem instances.

For problems with only two block sizes, optimal trees as outlined in
Lemma~\ref{lem:largesmall} can readily be implemented by two
concurrent gather or scatter operations on the domains of small and
large blocks, respectively, and a single point-to-point
communication. Since both gather/scatter operations are regular (small
or large blocks), we can use the best implementations available for
these problem, \eg, the library native \mpigather and \mpiscatter
operations.  We use these implementations which we call \twogatherv
and \twoscatterv for the \textbf{Spikes} and \textbf{Skewed}
distributions to compare against MPI library native \mpigatherv and
\mpiscatterv operations, and against implementations of the adaptive
binomial tree algorithm called \tuwgatherv and
\tuwscatterv~\cite{Traff18:irreggatscat}. We have done experiments on
a medium-large Intel/InfiniBand cluster with 2000 Dual Intel Xeon
E5-2650v2 8-core processors running at 2.6GHz, interconnected with an
InfiniBand QDR-80 network\footnote{This is the so-called Vienna
  Scientific Cluster, the VSC3, see \url{vsc.ac.at}.  The author
  thanks for support and access to this machine. The code used for
  these experiments is available under
  \url{par.tuwien.ac.at/Downloads/TUWMPI/tuwgatherv.c}.}.  The MPI
library is the native Intel MPI version 2018, and we choose the
binomial tree algorithm for \mpigatherv and the linear algorithm for
\mpiscatterv which seemed to be the best performing available
implementations in this library. We used $p=8000$ MPI processes on 500
nodes with 16 processes running on each.

Results for $b=1,10,100,1000,10000$ are plotted in
Figure~\ref{fig:VSCgather} and Figure~\ref{fig:VSCscatter}, and show
that optimal trees can perform better than binomial trees by a
significant percentage, in the experiments with the \textbf{Spikes}
scatter problems ranging from 25\% to 50\%.  The poor performance of
\mpiscatterv on this system is due to a linear algorithm, which is
clearly not the right choice for small(er), irregular scatter
problems.

\section{Concluding remarks}

This paper investigated the irregular scatter and gather collective
communication operations, both with respect to completion times for
the operations, as well as with respect to the difficulty of finding
communication trees leading to good completion times. The results
under a synchronous, one-ported point-to-point communication model
show that there is a difference in both respects between ordered and
non-ordered communication trees as introduced here: Strongly ordered,
minimal completion time trees can be computed in polynomial time,
whereas constructing possibly better, non-ordered trees is NP-hard. We
implemented dynamic programming algorithms for computing optimal
ordered trees, and compared the quality (completion times) of the
constructed trees for a set of different problem data block
distributions. Based on these experiments, the problem-dependent,
adaptive binomial tree construction which can compute trees fast in a
distributed manner~\cite{Traff18:irreggatscat} can be seen to produce
ordered trees that are very close to the optimal completion time trees
and probably sufficient for all practical purposes. However,
experiments with a real implementation for specially structured
problems show that there are practical cases where optimal algorithms
can do significantly better. This leaves room for devising practical
algorithms that come still closer to the optimum solutions.

A number of interesting open problems has emerged that it would be
worthwhile and fruitful to pursue further.
\begin{itemize}
\item
  Can the dynamic programming constructions be improved and extended?
  Monotonicity properties in the gather and scatter trees can possibly
  lower the complexity. Which alternative approaches can possibly lead
  to optimal or approximately optimal trees?
\item
  What is the complexity of constructing weakly ordered communication
  trees, even in the one-ported model? Weakly ordered gather and
  scatter trees store consecutive segments of data blocks at all
  processors, but allow segments to be communicated in any order, not
  strictly in increasing rank order as required by the strictly
  ordered trees. Resolving this question could throw light on the
  problems in more asynchronous communication models.
\item
  Under what conditions are trees not optimal communication structures
  for the gather and scatter problems? When might directed acyclic
  graphs (DAGs) perform better? How can such DAGs be constructed?
\item
  How difficult are the problems with $k>1$ communication ports? How
  difficult are the problems under asynchronous communication models
  permitting overlap of communication operations?
\end{itemize}

\bibliographystyle{plain}
\bibliography{traff,parallel}

\appendix
\label{app:moreresults}

\begin{table*}[ht]
\caption{Gather tree completion times for ultra low latency network,
  $p=2000,b=1000,r=\floor{p/2},\rho=5,\alpha=1,\beta=1,\gamma=1,0$. For
  the different tree constructions, both the cases where the root $r$
  is given externally (first row), and the case where a best root is
  chosen by the algorithm (second row) are shown. For the latter, the
  chosen root $(r)$ is given in brackets.}
\label{tab:res-ultralatency1000}
\begin{center}
\begin{tiny}
  \begin{tabular}{lrrrrrrrrrrr}
  \hline
  $\gamma=1$ & $m$ & Linear & ($r$) & Binary & ($r$) & Oblivious & ($r$) & Adaptive & ($r$) & Optimal & ($r$) \\
   Same & 2000000 & 2001999 & & 3610016 & & 2000011 & & 2000011 & & 2000011 & \\
      & 2000000 & 2001999 & (0) & 3226015 & (0) & 2000011 & (0) & 2000011 & (1023) & 2000011 & (1998) \\
 Random & 1983668 & 1985667 & & 3586046 & & 2030136 & & 1993762 & & 1983680 & \\
      & 1983668 & 1985667 & (0) & 3196585 & (753) & 2020053 & (511) & 1983679 & (331) & 1983679 & (1171) \\
 Random & 1983668 & 1985667 & & 4374656 & & 2338418 & & 2248476 & & 1983680 & \\
 decreasing     & 1983668 & 1985667 & (0) & 3192367 & (0) & 1983679 & (0) & 1983679 & (1) & 1983678 & (0) \\
 Random & 1983668 & 1985667 & & 3881291 & & 3022129 & & 2929259 & & 1983680 & \\
 increasing     & 1983668 & 1985667 & (0) & 3199725 & (1236) & 2076549 & (1999) & 1983679 & (1791) & 1983678 & (1998) \\
 Bucket & 2005668 & 2007667 & & 3612247 & & 2032024 & & 2012762 & & 2005680 & \\
      & 2005668 & 2007667 & (0) & 3233050 & (0) & 2032024 & (0) & 2005679 & (215) & 2005679 & (682) \\
 Spikes & 2001600 & 2003599 & & 3567899 & & 2141583 & & 2036604 & & 2001613 & \\
      & 2001600 & 2003599 & (0) & 3217645 & (775) & 2131633 & (1999) & 2001611 & (1549) & 2001610 & (1198) \\
  Decreasing & 2003000 & 2004999 & & 4415103 & & 2352535 & & 2265155 & & 2003012 & \\
      & 2003000 & 2004999 & (0) & 3223175 & (0) & 2003011 & (0) & 2003011 & (1) & 2003010 & (0) \\
  Increasing & 2003000 & 2004999 & & 3915600 & & 3047491 & & 2954363 & & 2003012 & \\
      & 2003000 & 2004999 & (0) & 3232091 & (1999) & 2096139 & (1999) & 2003011 & (1791) & 2003010 & (1998) \\
  Alternating & 2000000 & 2001999 & & 3609016 & & 2000011 & & 2000011 & & 2000011 & \\
      & 2000000 & 2001999 & (0) & 3226015 & (0) & 2000011 & (0) & 2000011 & (1023) & 2000011 & (1998) \\
Skewed & 2001995 & 2003994 & & 4402995 & & 17201968 & & 4002001 & & 2002010 & \\
      & 2001995 & 2003994 & (0) & 2401998 & (0) & 2002006 & (0) & 2002006 & (3) & 2001998 & (2) \\
  Two blocks & 2000000 & 2000002 & & 3000002 & & 11000011 & & 3000002 & & 2000002 & \\
      & 2000000 & 2000001 & (0) & 2000001 & (0) & 2000001 & (0) & 2000001 & (1999) & 2000001 & (0) \\
\hline
$\gamma=0$ & $m$ & Linear & ($r$) & Binary & ($r$) & Oblivious & ($r$) & Adaptive & ($r$) & Optimal & ($r$) \\
 Same & 2000000 & 2000999 & & 3608016 & & 1999011 & & 1999011 & & 1999011 & \\
      & 2000000 & 2000999 & (0) & 3225015 & (0) & 1999011 & (0) & 1999011 & (1023) & 1999011 & (1998) \\
 Random & 1983668 & 1984973 & & 3578075 & & 2028324 & & 1991822 & & 1982985 & \\
      & 1983668 & 1983997 & (1996) & 3190317 & (0) & 2018241 & (511) & 1981739 & (330) & 1981679 & (891) \\
 Random & 1983668 & 1984688 & & 4371920 & & 2336421 & & 2246476 & & 1982701 & \\
 decreasing     & 1983668 & 1983667 & (0) & 3190627 & (0) & 1981679 & (0) & 1981679 & (0) & 1981678 & (0) \\
 Random & 1983668 & 1984688 & & 3876934 & & 3020146 & & 2927474 & & 1982701 & \\
 increasing     & 1983668 & 1983667 & (1999) & 3194265 & (1999) & 2074566 & (1999) & 1981894 & (1791) & 1981681 & (1998) \\
 Bucket & 2005668 & 2006474 & & 3607205 & & 2030782 & & 2011266 & & 2004487 & \\
      & 2005668 & 2006334 & (1999) & 3229693 & (1999) & 2030782 & (0) & 2004183 & (215) & 2004180 & (891) \\
 Spikes & 2001600 & 2003598 & & 3542954 & & 2136583 & & 2031604 & & 2001610 & \\
      & 2001600 & 1998599 & (1992) & 3192623 & (775) & 2126633 & (1999) & 1996611 & (1549) & 1996610 & (1198) \\
  Decreasing & 2003000 & 2003998 & & 4412223 & & 2350535 & & 2263154 & & 2002011 & \\
      & 2003000 & 2002998 & (0) & 3221689 & (0) & 2001010 & (0) & 2001010 & (0) & 2001009 & (0) \\
  Increasing & 2003000 & 2003997 & & 3911230 & & 3045506 & & 2952570 & & 2002010 & \\
      & 2003000 & 2002998 & (1999) & 3226539 & (1999) & 2094154 & (1999) & 2001218 & (1791) & 2001010 & (1998) \\
  Alternating & 2000000 & 2000499 & & 3603016 & & 1999511 & & 1998511 & & 1998511 & \\
      & 2000000 & 2000499 & (0) & 3221515 & (0) & 1999511 & (0) & 1998511 & (1022) & 1998511 & (1998) \\
Skewed & 2001995 & 2003993 & & 4002994 & & 16801968 & & 3602001 & & 2002007 & \\
      & 2001995 & 1603994 & (0) & 2001998 & (0) & 1602006 & (0) & 1602006 & (3) & 1601998 & (2) \\
  Two blocks & 2000000 & 2000002 & & 2000002 & & 11000011 & & 2000002 & & 2000002 & \\
      & 2000000 & 1000001 & (0) & 1000001 & (0) & 1000001 & (0) & 1000001 & (1999) & 1000001 & (0) \\
\hline
  \end{tabular}
\end{tiny}
\end{center}
\end{table*}

\begin{table*}[ht]
\caption{Gather tree completion times for high latency network,
  $p=2000,b=1000,r=\floor{p/2},\rho=5,\alpha=1000,\beta=1,\gamma=1,0$. For
  the different tree constructions, both the cases where the root $r$
  is given externally (first row), and the case where a best root is
  chosen by the algorithm (second row) are shown. For the latter, the
  chosen root $(r)$ is given in brackets.}
\label{tab:res-highlatency1000}
\begin{center}
\begin{tiny}
  \begin{tabular}{lrrrrrrrrrrr}
  \hline
  $\gamma=1$ & $m$ & Linear & ($r$) & Binary & ($r$) & Oblivious & ($r$) & Adaptive & ($r$) & Optimal & ($r$) \\
   Same & 2000000 & 3999000 & & 3626000 & & 2011000 & & 2011000 & & 2011000 & \\
      & 2000000 & 3999000 & (0) & 3241000 & (0) & 2011000 & (0) & 2011000 & (1023) & 2011000 & (1998) \\
 Random & 1983668 & 3982668 & & 3601460 & & 2041125 & & 2004751 & & 1995061 & \\
      & 1983668 & 3982668 & (0) & 3211992 & (753) & 2031042 & (511) & 1994668 & (331) & 1994668 & (1014) \\
 Random & 1983668 & 3982668 & & 4389641 & & 2349407 & & 2259465 & & 1994878 & \\
 decreasing     & 1983668 & 3982668 & (0) & 3206353 & (0) & 1994668 & (0) & 1994668 & (1) & 1993884 & (0) \\
 Random & 1983668 & 3982668 & & 3896276 & & 3033118 & & 2940248 & & 1994878 & \\
 increasing     & 1983668 & 3982668 & (0) & 3216898 & (1236) & 2087538 & (1999) & 1994668 & (1791) & 1993884 & (1998) \\
 Bucket & 2005668 & 4004668 & & 3627966 & & 2043013 & & 2023751 & & 2017333 & \\
      & 2005668 & 4004668 & (0) & 3248035 & (0) & 2043013 & (0) & 2016668 & (215) & 2016668 & (1006) \\
 Spikes & 2001600 & 4000600 & & 3583827 & & 2152572 & & 2047593 & & 2014600 & \\
      & 2001600 & 4000600 & (0) & 3232618 & (775) & 2142622 & (1999) & 2012600 & (1549) & 2011600 & (1198) \\
  Decreasing & 2003000 & 4002000 & & 4430088 & & 2363524 & & 2276144 & & 2014256 & \\
      & 2003000 & 4002000 & (0) & 3237161 & (0) & 2014000 & (0) & 2014000 & (1) & 2013649 & (0) \\
  Increasing & 2003000 & 4002000 & & 3930239 & & 3058480 & & 2965352 & & 2014256 & \\
      & 2003000 & 4002000 & (0) & 3247473 & (1233) & 2107128 & (1999) & 2014000 & (1791) & 2013649 & (1998) \\
  Alternating & 2000000 & 3999000 & & 3625000 & & 2011000 & & 2011000 & & 2011000 & \\
      & 2000000 & 3999000 & (0) & 3241000 & (0) & 2011000 & (0) & 2011000 & (1023) & 2011000 & (1998) \\
Skewed & 2001995 & 4000995 & & 4407990 & & 17212957 & & 4012990 & & 2016995 & \\
      & 2001995 & 4000995 & (0) & 2404995 & (0) & 2012995 & (0) & 2012995 & (3) & 2004995 & (2) \\
  Two blocks & 2000000 & 2002000 & & 3002000 & & 11011000 & & 3002000 & & 2002000 & \\
      & 2000000 & 2001000 & (0) & 2001000 & (0) & 2001000 & (0) & 2001000 & (1999) & 2001000 & (0) \\
\hline
$\gamma=0$ & $m$ & Linear & ($r$) & Binary & ($r$) & Oblivious & ($r$) & Adaptive & ($r$) & Optimal & ($r$) \\
 Same & 2000000 & 3998000 & & 3624000 & & 2010000 & & 2010000 & & 2010000 & \\
      & 2000000 & 3998000 & (0) & 3240000 & (0) & 2010000 & (0) & 2010000 & (1023) & 2010000 & (1998) \\
 Random & 1983668 & 3981974 & & 3594007 & & 2039313 & & 2002811 & & 1994014 & \\
      & 1983668 & 3980998 & (1996) & 3205203 & (0) & 2029230 & (511) & 1992728 & (330) & 1992668 & (891) \\
 Random & 1983668 & 3981689 & & 4386905 & & 2347410 & & 2257465 & & 1993877 & \\
 decreasing     & 1983668 & 3980668 & (0) & 3204613 & (0) & 1992668 & (0) & 1992668 & (0) & 1992154 & (0) \\
 Random & 1983668 & 3981689 & & 3891919 & & 3031135 & & 2938463 & & 1993877 & \\
 increasing     & 1983668 & 3980668 & (1999) & 3211188 & (1999) & 2085555 & (1999) & 1992883 & (1791) & 1992157 & (1998) \\
 Bucket & 2005668 & 4003475 & & 3622985 & & 2041771 & & 2022255 & & 2015857 & \\
      & 2005668 & 4003335 & (1999) & 3244678 & (1999) & 2041771 & (0) & 2015172 & (215) & 2015169 & (891) \\
 Spikes & 2001600 & 4000599 & & 3559937 & & 2147572 & & 2042593 & & 2012599 & \\
      & 2001600 & 3995600 & (1992) & 3208607 & (775) & 2137622 & (1999) & 2007600 & (1549) & 2006600 & (1198) \\
  Decreasing & 2003000 & 4000999 & & 4427208 & & 2361524 & & 2274143 & & 2013179 & \\
      & 2003000 & 3999999 & (0) & 3235675 & (0) & 2011999 & (0) & 2011999 & (0) & 2011712 & (0) \\
  Increasing & 2003000 & 4000998 & & 3925677 & & 3056495 & & 2963559 & & 2013179 & \\
      & 2003000 & 3999999 & (1999) & 3241774 & (1999) & 2105143 & (1999) & 2012207 & (1791) & 2011713 & (1998) \\
  Alternating & 2000000 & 3997500 & & 3619000 & & 2010500 & & 2009500 & & 2009500 & \\
      & 2000000 & 3997500 & (0) & 3236500 & (0) & 2010500 & (0) & 2009500 & (1022) & 2009500 & (1998) \\
Skewed & 2001995 & 4000994 & & 4007989 & & 16812957 & & 3612990 & & 2014994 & \\
      & 2001995 & 3600995 & (0) & 2004995 & (0) & 1612995 & (0) & 1612995 & (3) & 1604995 & (2) \\
  Two blocks & 2000000 & 2002000 & & 2002000 & & 11011000 & & 2002000 & & 2002000 & \\
      & 2000000 & 1001000 & (0) & 1001000 & (0) & 1001000 & (0) & 1001000 & (1999) & 1001000 & (0) \\
\hline
\end{tabular}
\end{tiny}
\end{center}
\end{table*}

\end{document}